\documentclass[journal,10pt]{IEEEtran}

\usepackage[T1]{fontenc}
\usepackage{amsmath,amssymb,amsthm}
\usepackage{bm}
\usepackage{graphicx}
\usepackage{cite}
\usepackage{microtype}
\usepackage{algorithm}
\usepackage{algorithmic}
\usepackage{pgfplots}
\pgfplotsset{compat=1.16}
\usetikzlibrary{arrows.meta,decorations.pathreplacing}
\pgfplotsset{
  twcaxis/.style={width=\columnwidth,height=4.1cm,
    label style={font=\footnotesize}, tick label style={font=\scriptsize},
    legend style={font=\tiny,fill opacity=0.85,text opacity=1,draw=black!30},
    grid=both, major grid style={black!12}, minor grid style={black!6}}}

\newtheorem{theorem}{Theorem}
\newtheorem{lemma}{Lemma}
\newtheorem{proposition}{Proposition}
\newtheorem{corollary}{Corollary}
\newtheorem{remark}{Remark}

\newcommand{\E}{\mathbb{E}}
\newcommand{\Prb}{\mathbb{P}}
\newcommand{\Var}{\operatorname{Var}}
\newcommand{\Shat}{\widehat S}
\newcommand{\R}{\mathbb{R}}
\newcommand{\odc}{ODC}
\newcommand{\boc}{BOC}

\usetikzlibrary{arrows.meta,backgrounds,positioning,calc}
\definecolor{c1}{RGB}{42,116,196}
\definecolor{c2}{RGB}{237,125,49}
\definecolor{c3}{RGB}{58,158,92}
\definecolor{c4}{RGB}{140,100,200}
\definecolor{pnl}{RGB}{243,248,253}
\definecolor{pbd}{RGB}{190,214,238}
\definecolor{hdc}{RGB}{20,60,120}
\definecolor{bxb}{RGB}{225,238,250}
\definecolor{bxg}{RGB}{228,243,232}
\begin{document}

\title{Occupancy-Domain Over-the-Air Computation}

\author{Seyed~Mohammad~Azimi-Abarghouyi,~\IEEEmembership{Member,~IEEE}%
\thanks{The author is with the Department of Electrical Engineering, Chalmers University of Technology, Gothenburg, Sweden (e-mail: azimimo@chalmers.se).}}

\maketitle

\begin{abstract}
	Over-the-air computation (AirComp) aggregates distributed data through the wireless multiple-access channel, but coherent implementations require channel state information (CSI), phase alignment, and power control, whereas non-coherent energy methods remain affected by fading. Signal superposition at the receive antenna is linear but requires coherence, and energy superposition is linear only in expectation over fading. We introduce occupancy-domain computation (ODC), whose observable is neither a received amplitude nor an energy: the sum is carried by the silence of the shared resources. With exponential Bernoulli activation, the individual silence probabilities multiply, and the server recovers the sum from the idle fraction using binary activity decisions alone, so that once an activation is detected the amplitude that produced it does not enter the estimate. We characterize the maximum-likelihood estimator, optimal load, and a scale-integrated Fisher-information bound for non-adaptive operation over unknown dynamic ranges. We then introduce balanced occupancy computation (BOC), where each device forms a data-dependent quota of random burst placements. This removes the random placement-count fluctuation of Bernoulli activation; under ideal detection, the leading-order asymptotic root-mean-square error of BOC is no larger than that of Bernoulli ODC at any load and approaches $1/\sqrt{2M}$, where $M$ is the number of resource elements, as the number of devices becomes small relative to $M$. We further analyze unknown-scale operation, finite-frame deviations, and heterogeneous detection misses. Simulations validate the theory and compare ODC/BOC with affine non-coherent energy aggregation and REED.
\end{abstract}
\vspace{-5pt}
\begin{IEEEkeywords}
Over-the-air computation, CSI-free aggregation, non-coherent detection, occupancy estimation, collision channel.
\end{IEEEkeywords}
\vspace{-10pt}
\section{Introduction}
\label{sec:intro}

\IEEEPARstart{M}{any} wireless systems need a function of the values held by their devices rather than the values themselves, and most often that function is a sum. Environmental and industrial sensing, distributed consensus, and federated learning at the network edge all reduce to the same uplink primitive: repeated aggregation over a shared channel. Transmitting the values separately and computing at the server costs one resource per device, so the latency grows with the population. Over-the-air computation (AirComp) avoids this by letting the devices transmit at once, so that the multiple-access channel performs the addition~\cite{NazerGastpar07,GoldenbaumStanczak13,ZhuHuangBAA20,YangShiAirFL20,SahinSurvey23,AzimiSurvey25}. Coherent AirComp achieves this by aligning the effective complex gains at the receiver, so that the received waveform is the desired sum up to noise.

Where that alignment is placed decides what has to be built. Transmitter-side designs burden the devices, which maintain phase and sample-level synchronization and apply power control to keep the aggregate unbiased; those in deep fades cannot meet the required scaling within their budget and are truncated, which biases the result~\cite{CaoZhuPower20,LiuAirCompScaling20}. Receiver-side designs move the burden to the server through denoising-factor optimization, beamforming and device selection~\cite{ZhuHuangBAA20,YangShiAirFL20}, and blind designs go further: the transmitters send without channel knowledge, while the receiver estimates the instantaneous channels and uses a large antenna array to recover the aggregate~\cite{AmiriBlind21,RazavikiaQAM24}. Digital variants change what the constellation carries: computation-by-communications mappings decode a Euclidean structure from an aligned superposition~\cite{Razavikia24}, SumComp gives a closed-form QAM/PAM coding rule over the channel-inverted Gaussian multiple-access channel~\cite{RazavikiaSumComp25}, and one-bit designs decode gradient signs by majority vote~\cite{ZhuOneBit21}. Lattice joint source--channel coding moves part of the burden into a decoding step~\cite{AzimiFedJSCC24,AzimiComputeUpdate24,AzimiAirCPU26}, and learning-aware designs optimize the aggregation weights~\cite{AzimiWeighted24}, the network architecture~\cite{AzimiHierarchical24}, or the propagation environment itself~\cite{AzimiAirPASS26}. All of them are coherent and rely on instantaneous CSI somewhere in the link. This CSI is obtained through pilots and feedback, whose overhead grows with the number of devices and channel selectivity, and must be refreshed once per coherence interval.

These requirements motivate non-coherent AirComp, which gives up exact waveform superposition in exchange for freedom from instantaneous CSI. Type-based multiple access estimates histograms from per-signature energies~\cite{MergenTong06}, and digital balanced-numeral AirComp indexes those signatures by subcarrier position, recovering the aggregate from the energy at each position~\cite{SahinBalanced24}. Continuous sums are obtained from energy statistics or dithering~\cite{GoldenbaumStanczak13,WenNCAirFL24,MichelusiNCOTA}, and a recent unified treatment formalizes this class, non-coherent over-the-air computation (NC-OAC), through affine source-to-energy mappings under statistical or instantaneous channel-amplitude knowledge~\cite{DahlNC26}. REED estimates a continuous signed sum from paired positive and negative energy measurements, with a fading self-noise that repetition over independently faded pairs can average down~\cite{ChenREED26}. These designs remove the need for instantaneous CSI but not for channel knowledge altogether. The affine mappings of~\cite{DahlNC26} and the transmit scaling of~\cite{ChenREED26} both normalize each device by its average channel power, that is, statistical CSI acquired by slow-timescale calibration and tracked as the geometry changes.

Coherent and non-coherent designs alike recover the computation from a real- or complex-valued signal statistic, and this is what limits the non-coherent ones. Energy superposition is linear only in expectation over the fading, so within a coherence block the estimate carries the realized channel gains as random weights, and the resulting error does not vanish with the number of resources spent inside that block. Removing it requires either channel knowledge or diversity across independent fading realizations.

The present work changes the observable itself. Instead of estimating a signal magnitude, the receiver records only whether each resource element is busy or idle. A binary occupancy statistic inherits no channel weight at all: the receiver never reads the amplitude behind a detected burst, so fading acts only through whether detection succeeds. No channel quantity appears in the encoder at any device or in the server-side estimator. The three levels of channel knowledge are therefore distinct: coherent AirComp needs the instantaneous channel, the non-coherent energy designs need only its average power, and occupancy encoding needs neither.

Occupancy statistics are classical in RFID cardinality estimation~\cite{KodialamNandagopal06}, probabilistic counting~\cite{FlajoletMartin85,Whang90}, weighted-cardinality sketches~\cite{Lemiesz21}, and framed random access~\cite{Schoute83,Roberts75,Polyanskiy17}. Collision-based approximate estimation has also been used in large-scale IoT to infer cardinality and related aggregate quantities from random probe collisions~\cite{CaoACE17}, and constant participation budgets are known to improve group-testing designs~\cite{Aldridge19}, where the goal is support recovery. In all of them the randomization is applied by the estimator to data it already holds. Here the randomization is the encoding, so the occupancy statistic is a function of the device values rather than of the population that produced them.

\vspace{-7pt}
\subsection{Contributions}
We develop \emph{occupancy-domain computation} (\odc{}), in which randomized transmissions encode the desired sum in the probability of an idle resource element (RE). Our first construction uses independent Bernoulli activation across the shared REs: each device maps its value into a per-RE transmission probability whose complement depends exponentially on that value, so the silence probabilities multiply across devices and their logarithm recovers the sum. The activation scale thereby becomes a design variable of the computation rather than a property of the channel. We derive the maximum-likelihood estimator, show that the idle count is a sufficient statistic and that the estimator is asymptotically efficient, and obtain the optimal frame load.

We prove a conservation law for non-adaptive designs to show what this encoder cannot escape: the Fisher information a frame carries about the aggregate, integrated over scale, equals $\pi^{2}/6$ however the activation gains are distributed. We use it to bound the worst-case accuracy over a dynamic range, and show that a guarded geometric design operates close to that bound. Since the bound applies to each resource separately, it also indicates how it can be evaded.

Bernoulli activation carries a second cost, independent of that bound: since each RE is decided by its own coin flip, the number of bursts a device transmits is itself random, and that randomness carries no information about the aggregate. We therefore introduce \emph{balanced occupancy computation} (\boc{}), in which each device draws a data-dependent number of burst positions instead of an independent coin per resource, so that its placement-draw quota is fixed up to rounding while the transmissions remain unidentified at the receiver. We show that the occupancy identity remains exact, derive the error law and its operating point, and obtain a constant-factor reduction in relative error, equivalently in channel uses at equal accuracy. We further prove that the leading-order error never exceeds that of Bernoulli activation at any load, and derive a deviation bound valid at finite frame length and finite population.

We propose a two-phase probe for operation without prior scale knowledge, which nearly attains the accuracy available when the scale is known, and a reference-group correction that removes the mean attenuation caused by heterogeneous misses without per-device signal-to-noise-ratio (SNR) knowledge. Fig.~\ref{fig:mechanism} summarizes the common end-to-end mechanism of ODC and BOC. 

We finally evaluate both constructions against non-coherent energy baselines. The margin over affine non-coherent energy aggregation grows with the receive SNR, because that scheme sits at a fading floor that no resource budget can close. The margin over REED is instead bounded, but it is obtained while REED is given exact per-device average channel powers and the occupancy schemes are given none, and a calibration error degrades REED further while leaving the occupancy curves unchanged.

\vspace{-5pt}
\section{System Model and the Occupancy Principle}
\label{sec:model}

\subsection{System model}
Consider $K$ single-antenna devices and a single-antenna server. Device $k$ holds a value $x_k\ge0$, normalized to $[0,1]$ by known bounds as is standard in AirComp. Vector data are handled coordinate-wise, and signed data are treated as described in Section~\ref{sec:ext}. The server estimates the aggregate $S=\sum_{k=1}^{K}x_k$,
or a nomographic function $\Psi(\sum_kg_k(x_k))$ of the device values. 

Resources are organized into frames of $M$ orthogonal REs, for example subcarrier--symbol pairs of an OFDM grid. The channel coefficients $h_{k,m}\in\mathbb C$, for $k=1,\dots,K$ and $m=1,\dots,M$, are unknown to all nodes; no pilots are transmitted and no reciprocity is assumed. Where a comparison with energy-domain schemes requires it, we specialize to block fading, in which $h_{k,m}=h_k$ for all REs of one frame. Occupancy encoding does not use this specialization: Section~\ref{sec:det} works instead with fading independent across REs, which random placement over an interleaved frame makes the relevant case. We assume slot-level alignment only, with frame and RE timing obtained from any beacon-based medium access control protocol and with guard intervals absorbing residual offsets.

Let $\mathcal A_m$ be the set of devices transmitting on RE $m$, and $\psi$ the burst waveform. The received sample is
\begin{equation}
	y_m=\sum_{k\in\mathcal A_m}h_{k,m}\,\psi+z_m,\qquad z_m\sim\mathcal{CN}(0,\sigma_z^{2}).
	\label{eq:rxsignal}
\end{equation}
The receiver forms the energy of $y_m$, normalized to unit noise power, compares it with a threshold $\tau_\alpha$, and returns a binary decision $B_m\in\{0,1\}$, where $B_m=0$ denotes an idle RE; no estimate of $y_m$ itself is formed. The threshold is set from the noise floor so that the false-alarm probability is $\alpha$, and false alarms are independent across REs. Until Section~\ref{sec:det}, detection is ideal and one RE carries one such sample; Section~\ref{sec:det} relaxes both, with $m$ then indexing a group of accumulated samples.

\begin{figure*}[t]
	\centering
	\includegraphics[width=.65\textwidth]{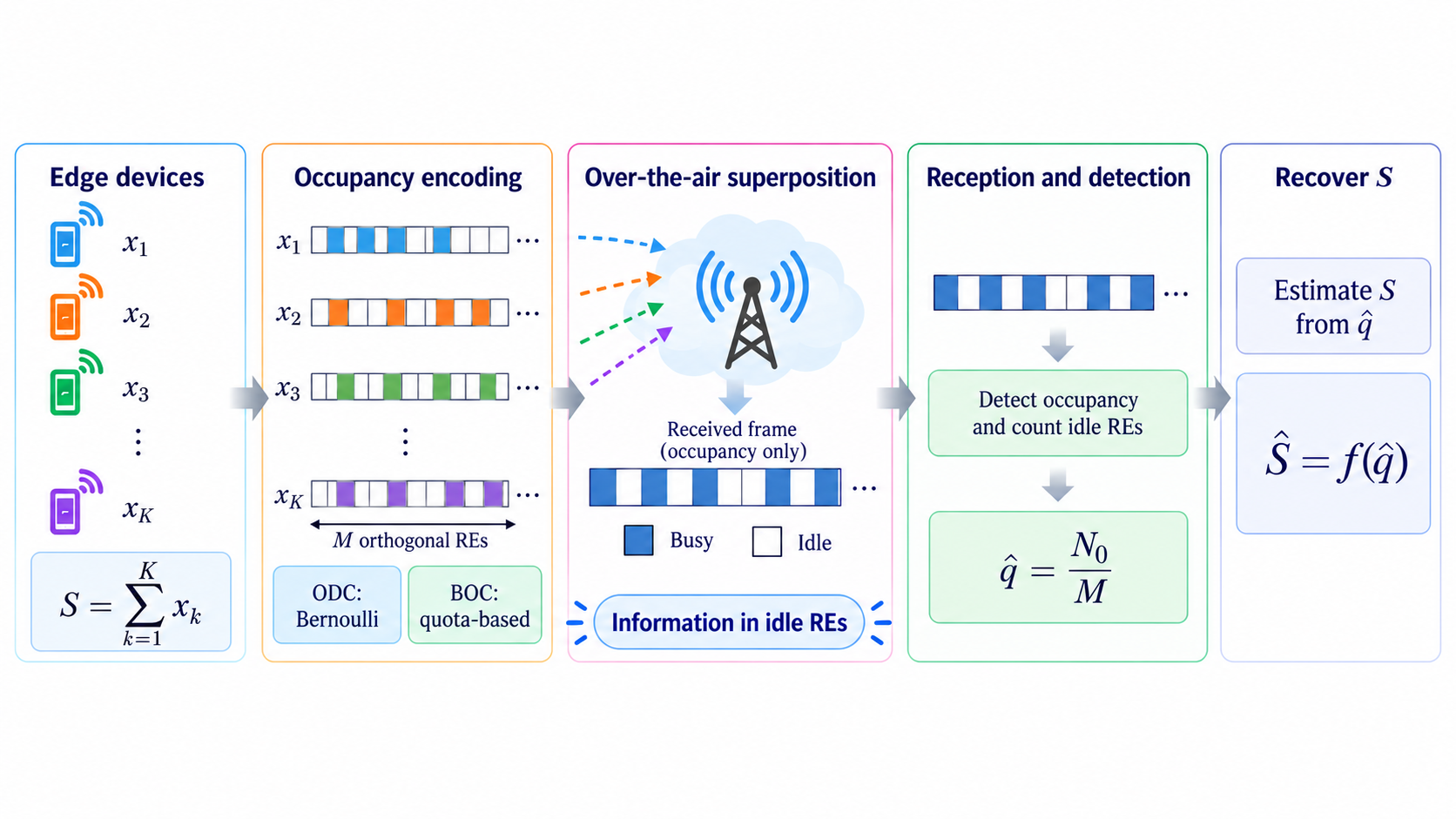}
	\vspace{-40pt}
	\caption{Occupancy-domain over-the-air computation.}
	\vspace{-5pt}
	\label{fig:mechanism}
\end{figure*}

\vspace{-10pt}
\subsection{The occupancy identity}

\begin{lemma}[Exact occupancy superposition]
\label{lem:prod}
Let device $k$ transmit on each RE with probability $p_k$, with the local randomization independent across devices and REs. Then $\Prb\{B_m=0\}=(1-\alpha)\prod_k(1-p_k)$. This is a product over devices, whereas the target $S$ is a sum, so the encoding must make $-\log(1-p_k)$ additive in $x_k$; among monotone maps this forces, up to the choice of a gain $a>0$,
\begin{equation}
	p_k=1-e^{-a x_k},
	\label{eq:encode}
\end{equation}
and then
\begin{equation}
	\Prb\{B_m=0\}=(1-\alpha)\,e^{-aS}.
	\label{eq:identity}
\end{equation}
\end{lemma}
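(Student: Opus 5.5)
The plan has three parts: a probability computation, a functional-equation argument for uniqueness, and a substitution that gives \eqref{eq:identity}.

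\textbf{The idle probability.} Under ideal detection with false-alarm probability $\alpha$, the event $\{B_m=0\}$ is the intersection of two events. The first is that no device transmits on RE $m$, i.e., $\mathcal A_m=\emptyset$. The second is that the thresholded noise energy does not cross $\tau_\alpha$. Any active device on RE $m$ yields $B_m=1$ under ideal detection, so there is no other way to get an idle decision. The false alarm depends only on $z_m$, which is independent of the devices' local randomization. Hence $\Prb\{B_m=0\}=(1-\alpha)\Prb\{\mathcal A_m=\emptyset\}$. Because the activation coins are independent across devices, $\Prb\{\mathcal A_m=\emptyset\}=\prod_k(1-p_k)$. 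Independence across REs is not needed for this marginal statement; it only makes the per-RE statement hold identically for every $m$.

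\textbf{Uniqueness of the encoder.} Write $p_k=f(x_k)$ for a common monotone map $f:[0,1]\to[0,1)$, and set $g(x)=-\log(1-f(x))$. The idle probability is then $(1-\alpha)\exp(-\sum_k g(x_k))$. Requiring it to depend on the data only through $S$, for every $K$ and every admissible configuration, forces $\sum_k g(x_k)=G(\sum_k x_k)$ for some function $G$.

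Take $K=2$ and compare the configuration $(x,y)$ with $(x+y,0)$. This gives $g(x)+g(y)=g(x+y)+g(0)$. With $K=1$, $G=g$. Setting $h=g-g(0)$ turns this into Cauchy's equation $h(x+y)=h(x)+h(y)$ on the admissible range.

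Next I would argue $g(0)=0$. A device holding $x=0$ must not alter the idle probability, so $f(0)=0$; alternatively, comparing $K$ and $K+1$ devices with an added zero entry forces $g(0)=0$. Monotonicity of $f$ makes $g$ monotone. The standard Cauchy argument then gives $g(x)=ax$ on $[0,1]$: rationals first, then monotone squeezing, with the extension from $[0,1]$ to sums handled by restricting to pairs with $x+y\le1$. Finally, $a>0$ is needed for $f$ to be nontrivial and increasing. Solving $g(x)=ax$ for $f$ gives \eqref{eq:encode}.

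\textbf{Main obstacle.} The only delicate step is the uniqueness claim. Cauchy's equation on a bounded interval needs a regularity condition, and "among monotone maps" supplies exactly that. I would state that monotonicity excludes the pathological Hamel-basis solutions, and that the normalization $f(0)=0$ removes the additive constant. The constant could otherwise be absorbed as a multiplicative factor $(1-f(0))^K$ that depends on $K$ rather than on $S$.

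\textbf{The identity.} Substituting \eqref{eq:encode} into the product gives $\prod_k e^{-ax_k}=e^{-aS}$, which yields \eqref{eq:identity}.
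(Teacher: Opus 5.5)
Your proposal is correct and follows the paper's own proof: an RE is idle iff no device activates it and no false alarm occurs, and independence gives the product. Requiring dependence only through $S$ then gives Cauchy's equation for $g(x)=-\log(1-p(x))$, whose monotone solutions with $g(0)=0$ (a device at zero is silent) are linear, and substitution gives \eqref{eq:identity}. Your extra care with the bounded domain $[0,1]$ (restricting to pairs with $x+y\le1$) and with the additive constant $g(0)$ only tightens steps the paper treats more briefly.
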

\begin{IEEEproof}
An RE is read as idle if and only if no device activates it and no false alarm occurs. Independence gives the product. For $\Prb\{B_m=0\}$ to depend on the profile only through $S$, the map $g(x)=-\log(1-p(x))$ must satisfy $g(x)+g(y)=G(x+y)$ for all $x,y\ge0$. A device holding $x_k=0$ is silent, so $g(0)=0$ and $G=g$, reducing this to Cauchy's equation, whose monotone solutions are linear: $g(x)=ax$, which is \eqref{eq:encode}. Substituting it turns $\prod_ke^{-ax_k}$ into $e^{-aS}$.
\end{IEEEproof}

The identity of Lemma~\ref{lem:prod} is exact for every finite $K$. It requires no Poisson limit, no Gaussian approximation and no law of large numbers over devices. The channel coefficients do not appear in it, and $K$ never appears separately from $S$, so the number of devices may be unknown and time-varying.

Because the identity involves no channel coefficient, a device in a deep fade and a device close to the server contribute equally once their bursts are detected. The channel gains therefore do not weight the aggregate, and no power control, channel statistics or truncation is required. The transmitted waveform is an arbitrary constant-envelope burst, and on--off keying is sufficient. Amplifier nonlinearity, I/Q imbalance, phase noise and carrier-frequency offset do not enter the mapping \eqref{eq:identity}. They matter only through the busy/idle decisions, which the detection model of Section~\ref{sec:det} accounts for, and through leakage into neighbouring REs, which the guard design discussed there must control. What the transmitter must support is only burst/no-burst operation with a fixed waveform. The cost is spectral efficiency. A coherent scheme uses one RE per aggregated scalar, whereas occupancy encoding spends many REs on a single sum, since the accuracy is set by the number of binary observations; Sections~\ref{sec:scale} and~\ref{sec:boc} quantify this. Against it, occupancy encoding adds no pilot, feedback or calibration budget, whereas the coherent and energy-domain schemes do, so the gap closes as their acquisition overhead grows, fastest when many devices share short coherence intervals.
\vspace{-5pt}
\section{Bernoulli Activation and Its Scale Limitation}
\label{sec:scale}

\subsection{Encoding, decoding and error expression}
Given $M$ and $a$, device $k$ generates for each RE an independent Bernoulli variable with success probability \eqref{eq:encode} and, on success, transmits a constant-envelope burst. Devices with $x_k=0$ remain silent and consume no energy. The server compares each RE with the detection threshold and counts the idle REs $N_0$. Since $N_0/M$ estimates $\Prb\{B_m=0\}$, inverting \eqref{eq:identity} gives
\begin{equation}
\Shat=-\frac{1}{a}\,\log\!\Bigg(\min\!\left\{1,\frac{\max(N_0,1)}{M(1-\alpha)}\right\}\Bigg).
\label{eq:bestimator}
\end{equation}
The transmitter performs $M$ Bernoulli draws, and the receiver performs $M$ threshold comparisons and one logarithm, with no matrix operation and no per-device processing.

Let the \emph{frame load} be $\lambda\triangleq-\ln\Prb\{B_m=0\}$ at $\alpha=0$, so that an RE is idle with probability $e^{-\lambda}$ and $\lambda$ fixes how full the frame is. For this encoder $\lambda=aS$, and with $q=(1-\alpha)e^{-\lambda}$ we have $N_0\sim\mathrm{Bin}(M,q)$.

\begin{theorem}[Sufficiency, efficiency and optimal load]
\label{thm:bern}
$(B_1,\dots,B_M)$ is an i.i.d.\ Bernoulli sample whose distribution depends on the data only through $S$. The count $N_0$ is sufficient, and \eqref{eq:bestimator} is the nonnegative maximum-likelihood estimator (MLE) whenever $N_0\ge1$: the outer minimum returns the boundary estimate $0$ if the empirical idle fraction exceeds the no-activity level $1-\alpha$, while $\max(N_0,1)$ caps the estimate when $N_0=0$. For every fixed $\lambda>0$, both clipping events are asymptotically negligible except that the zero boundary is correct at $S=0$. As $M\to\infty$ at fixed $\lambda>0$,
\begin{equation}
\sqrt M\big(\Shat-S\big)\xrightarrow{d}\mathcal N\Big(0,\frac{1-q}{q}\frac{S^{2}}{\lambda^{2}}\Big),
\label{eq:bclt}
\end{equation}
which attains the Cram\'er--Rao bound of the occupancy model, whose per-RE Fisher information is $a^{2}q/(1-q)$. Consequently, for $\alpha\to0$ the relative root-mean-square error (RMSE) $\rho=\sqrt{\E(\Shat-S)^{2}}/S$ satisfies
\begin{equation}
\rho(\lambda,M)=\frac{1}{\sqrt M}\frac{\sqrt{e^{\lambda}-1}}{\lambda}+o(M^{-1/2}),
\label{eq:brmse}
\end{equation}
which is minimized at the unique positive root $\lambda^\star=1.5936$ of $\lambda e^{\lambda}=2(e^{\lambda}-1)$, where $\rho=(1.2426+o(1))/\sqrt M$. The minimum is shallow: $\lambda=0.8$ and $\lambda=3.2$ increase the error by only $11\%$ and $22\%$.
\end{theorem}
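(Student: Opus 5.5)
The proof follows the structure of the statement: first the sample structure and sufficiency, then the likelihood maximization with its boundary cases, then the asymptotics via the delta method, and finally a one-dimensional optimization in $\lambda$.

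\emph{Sample structure and sufficiency.} Independence of the local randomization across devices and REs, together with i.i.d.\ false alarms, makes $B_1,\dots,B_M$ independent. They are identically distributed because each device uses the same $p_k$ on every RE. By Lemma~\ref{lem:prod}, each $B_m$ is Bernoulli with $\Prb\{B_m=0\}=q=(1-\alpha)e^{-aS}$, which depends on the data only through $S$. The joint likelihood is
\begin{equation*}
q^{N_0}(1-q)^{M-N_0},
\end{equation*}
so $N_0$ is sufficient by the Fisher--Neyman factorization.

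\emph{Maximum likelihood and clipping.} The unconstrained maximizer in $q$ is $N_0/M$. The map $S\mapsto q$ is a strictly decreasing bijection from $[0,\infty)$ onto $(0,1-\alpha]$, and the likelihood in $q$ is unimodal. Hence the constrained MLE of $S\ge0$ is obtained by projecting $N_0/M$ onto $(0,1-\alpha]$ and inverting the map.
\begin{itemize}
\item If $N_0/M>1-\alpha$, the projection gives $q=1-\alpha$, which is the outer minimum and returns $\Shat=0$.
\item If $N_0=0$, the MLE is infinite. The cap $\max(N_0,1)$ is then a convention, which is why the statement restricts to $N_0\ge1$.
\end{itemize}
For negligibility, take fixed $\lambda>0$. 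Then $\Prb\{N_0=0\}=(1-q)^M$ decays exponentially. The event $N_0/M>1-\alpha$ has probability $\exp(-\Theta(M))$ by Hoeffding whenever $q<1-\alpha$, i.e.\ whenever $S>0$. When $S=0$ the clipped value $0$ is exactly correct.

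\emph{Asymptotic normality and efficiency.} The CLT gives $\sqrt M(N_0/M-q)\to\mathcal N(0,q(1-q))$. Apply the delta method with $g(u)=-\frac1a\log\frac{u}{1-\alpha}$, for which $g'(q)=-1/(aq)$. The limiting variance is
\begin{equation*}
\frac{1-q}{a^{2}q}=\frac{1-q}{q}\,\frac{S^{2}}{\lambda^{2}},
\end{equation*}
using $a=\lambda/S$; this is \eqref{eq:bclt}. The clipping events contribute only exponentially small perturbations, so Slutsky's lemma handles them. For the Fisher information, the per-RE Bernoulli information in $q$ is $1/(q(1-q))$, and $(dq/dS)^{2}=a^{2}q^{2}$. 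Multiplying gives $a^{2}q/(1-q)$, whose reciprocal matches the asymptotic variance, so the estimator is efficient.

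\emph{Main obstacle: moment convergence.} The statement asserts an RMSE expansion, and convergence in distribution alone does not give convergence of second moments. I would establish uniform integrability of $M(\Shat-S)^2$. The estimator is bounded because $\Shat\le\frac1a\log\frac{M}{1-\alpha}$, which is $O(\log M)$. On the high-probability event $|N_0/M-q|<q/2$, a second-order Taylor bound on the logarithm applies, and the remainder $\E[(N_0/M-q)^3]$ is $O(M^{-2})$. On the complementary event, the $O(\log^2 M)$ bound on the squared error is multiplied by an exponentially small probability, so its contribution vanishes. This gives $\E(\Shat-S)^2=\frac{1-q}{a^2qM}+o(M^{-1})$. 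Setting $\alpha\to0$, so that $q=e^{-\lambda}$, and dividing by $S^2$ yields \eqref{eq:brmse}.

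\emph{Optimal load.} Minimize $f(\lambda)=\log(e^\lambda-1)-2\log\lambda$. The stationarity condition $f'(\lambda)=0$ reads $\frac{e^\lambda}{e^\lambda-1}=\frac2\lambda$, i.e.\ $\lambda e^\lambda=2(e^\lambda-1)$. For uniqueness, write $h(\lambda)=(\lambda-2)e^\lambda+2$. Then $h(0)=0$ and $h'(\lambda)=(\lambda-1)e^\lambda$, so $h$ decreases on $(0,1)$ and increases on $(1,\infty)$. It therefore has exactly one positive zero, which also shows that $f$ has a unique minimizer.

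The remaining claims are numerical. I would solve for the root, $\lambda^\star\approx1.5936$, and evaluate $\sqrt{e^{\lambda^\star}-1}/\lambda^\star\approx1.2426$. Evaluating the same function at $\lambda=0.8$ and $\lambda=3.2$ and taking ratios to $1.2426$ gives the stated $11\%$ and $22\%$ increases.
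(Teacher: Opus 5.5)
Your proof is correct and follows essentially the same route as the paper's proof in Appendix~\ref{app:bern}. Both use the likelihood depending only on $N_0$, the clipped binomial MLE, the delta method with $|dS/dq|=1/(aq)$, the per-RE Fisher information $a^{2}q/(1-q)$, and the same auxiliary function $(\lambda-2)e^{\lambda}+2$ for uniqueness of $\lambda^\star$. Your uniform-integrability step adds rigor the paper lacks. It combines the $O(\log M)$ cap on $\Shat$ with a Taylor bound on the high-probability event, and so justifies moving from the distributional limit to the RMSE expansion, which the paper only asserts by noting that the clipping events are exponentially rare.
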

\begin{IEEEproof}
See Appendix~\ref{app:bern}.
\end{IEEEproof}

\begin{remark}[Comparison with the fading-induced error floor]
\label{rem:shot}
The error in \eqref{eq:brmse} is due only to the randomness of the activations, and it decreases without bound as $M$ increases. Energy-superposition schemes under block fading instead contain the term $\sum_k(|h_k|^{2}-\E|h_k|^{2})x_k$, which is fixed within a coherence block and therefore cannot be reduced by repetition inside that block. Under Rayleigh fading normalized to $\E|h_k|^{2}=1$ this gives a relative error floor $\sqrt{\sum_kx_k^{2}}/S=\Theta(1/\sqrt K)$. No such floor arises here, because the observable does not depend on the fading magnitudes.
\end{remark}

At the optimum the idle fraction is $e^{-\lambda^\star}=0.20$ when $\alpha=0$, which is the operating point used in classical occupancy estimation~\cite{KodialamNandagopal06}, here generalized from cardinality estimation to weighted sums. The gain that achieves the optimum, $a=\lambda^\star/S$, depends on the aggregate itself. If the load is too low, almost all REs remain idle and carry little information; if it is too high, idle REs become rare and the estimator becomes unreliable.
\vspace{-5pt}
\subsection{Multi-gain frames}
Choosing the gain in advance therefore means committing to a range of aggregates. The normalization gives $0\le S\le K$, but $S=0$ is possible if every device reports zero, and no fixed gain covers an aggregate arbitrarily close to it. We therefore fix a range $[S_{\min},S_{\max}]$ with $S_{\min}>0$ on which the design is required to perform, and write $D=S_{\max}/S_{\min}$
for its dynamic range. The standard remedy is to use several activation gains within one frame. Divide the frame into $Q$ gain groups with $M_q$ REs each and $\sum_qM_q=M$, and let group $q$ use $a_q=a_0\beta^{q}$ with $\beta>1$. We use equal allocation, $M_q=M/Q$, since no subrange of $[S_{\min},S_{\max}]$ is known to be more likely than another. Two additional groups, referred to as guard groups, are placed beyond the ends of the deployment range so that an aggregate near an endpoint is still covered on both sides. The idle count $N_q$ in group $q$ is $\mathrm{Bin}(M_q,\theta_q(S))$ with $\theta_q(S)=(1-\alpha)e^{-a_qS}$, and the counts $N_q$ are independent and jointly sufficient for $S$.

\begin{proposition}
\label{prop:concave}
The joint log-likelihood
\begin{equation}
\ell(S)=\sum_q\big[N_q\log\theta_q(S)+(M_q-N_q)\log(1-\theta_q(S))\big]
\label{eq:ll}\vspace{-5pt}
\end{equation}
is concave on $S>0$, and strictly concave whenever at least one RE is busy. If every RE is idle, $\ell$ is strictly decreasing and the constrained MLE is $S_{\min}$; if every RE is busy, $\ell$ is strictly increasing and the constrained MLE is $S_{\max}$. Otherwise the constrained MLE on $[S_{\min},S_{\max}]$ is unique. It is the root of $\ell'(S)=\sum_qa_q[(M_q-N_q)\theta_q/(1-\theta_q)-N_q]$ when $\ell'$ changes sign on the interval, obtained by bisection at a cost of $O(Q)$ operations per iteration; it is $S_{\max}$ when $\ell'>0$ throughout and $S_{\min}$ when $\ell'<0$ throughout.
\end{proposition}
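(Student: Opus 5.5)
I will work term by term. Write $u_q=a_qS$ and $\theta_q=(1-\alpha)e^{-u_q}$. Each summand of \eqref{eq:ll} splits into an idle part and a busy part.

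The idle part is $N_q\log\theta_q=N_q[\log(1-\alpha)-a_qS]$. It is affine in $S$, so it contributes nothing to $\ell''$.

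The busy part is $(M_q-N_q)\log(1-\theta_q(S))$, and its concavity is the only real computation. Set $f(S)=\log(1-ce^{-aS})$ with $c=1-\alpha\in(0,1]$. Then
\begin{equation*}
f'(S)=\frac{ac e^{-aS}}{1-ce^{-aS}}=\frac{a\theta}{1-\theta},
\end{equation*}
which is positive. Differentiating $\theta/(1-\theta)$ with $\theta'=-a\theta$ gives
\begin{equation*}
f''(S)=-\frac{a^{2}\theta}{(1-\theta)^{2}}<0.
\end{equation*}
This holds for all $S>0$, since there $\theta<1$ (if $\alpha=0$, $\theta=e^{-aS}<1$ because $S>0$).

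Hence
\begin{equation*}
\ell''(S)=-\sum_q(M_q-N_q)\frac{a_q^{2}\theta_q}{(1-\theta_q)^{2}}\le0,
\end{equation*}
so $\ell$ is concave. The inequality is strict as soon as some $M_q-N_q\ge1$, that is, as soon as at least one RE is busy. This also confirms the stated form of $\ell'$, obtained by adding $-a_qN_q$ from the idle part.

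The boundary cases follow from the sign of $\ell'$.
\begin{itemize}
\item \emph{Every RE idle} ($N_q=M_q$ for all $q$). Then $\ell'(S)=-\sum_qa_qM_q<0$, so $\ell$ is strictly decreasing and the constrained maximizer is $S_{\min}$.
\item \emph{Every RE busy} ($N_q=0$ for all $q$). Then $\ell'(S)=\sum_qa_qM_q\theta_q/(1-\theta_q)>0$, so $\ell$ is strictly increasing and the constrained maximizer is $S_{\max}$.
\end{itemize}

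In the mixed case $\ell$ is strictly concave and continuous on the compact interval $[S_{\min},S_{\max}]$, so its maximizer exists and is unique. Strict concavity makes $\ell'$ strictly decreasing. There are three subcases.
\begin{itemize}
\item If $\ell'$ changes sign on the interval, it has a unique root there, and that root is the maximizer.
\item If $\ell'>0$ throughout, $\ell$ is increasing and the maximizer is $S_{\max}$.
\item If $\ell'<0$ throughout, the maximizer is $S_{\min}$.
\end{itemize}

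For the algorithm, monotonicity of $\ell'$ means that bisection on its sign converges to the root. Each evaluation of $\ell'$ is a sum of $Q$ (plus two guard) closed-form terms, so one iteration costs $O(Q)$ operations.

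The only step needing care is keeping $\theta_q<1$ so that the logarithms and the denominators are well defined. This holds on $S>0$ for every $\alpha\in[0,1)$, which is why the claim is restricted to $S>0$.
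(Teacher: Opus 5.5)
Your proof is correct and follows the paper's route: you compute $\ell''(S)=-\sum_q(M_q-N_q)a_q^{2}\theta_q/(1-\theta_q)^{2}$ from $\theta_q'=-a_q\theta_q$, and you settle the all-idle and all-busy cases from the sign of $\ell'$. You also spell out two steps the paper leaves implicit: the mixed case, where $\ell'$ is strictly decreasing and so gives either a unique root or an endpoint maximizer, and the requirement $\theta_q<1$ on $S>0$.
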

\begin{IEEEproof}
Since $\theta_q'(S)=-a_q\theta_q(S)$, we obtain $\ell''(S)=-\sum_q(M_q-N_q)a_q^{2}\theta_q/(1-\theta_q)^{2}\le0$, with strict inequality if any $M_q-N_q>0$. If $N_q=M_q$ for all $q$, then $\ell'(S)=-\sum_qM_qa_q<0$; if $N_q=0$ for all $q$, then $\ell'(S)=\sum_qa_qM_q\theta_q/(1-\theta_q)>0$.
\end{IEEEproof}

All groups must be used jointly. Saturated and nearly empty groups contribute little curvature to \eqref{eq:ll}, but selecting a single group discards $Q-1$ counts, and inverting each group separately is undefined when $N_q=0$.
\vspace{-5pt}
\subsection{A lower bound for non-adaptive designs}
The guarded geometric design is one way to spread the gains over the range, and it is natural to ask whether a better spread exists. Since the accuracy of interest is relative, we work with the information about $\ln S$ rather than about $S$. Multiplying the per-RE Fisher information $a^{2}q/(1-q)$ of Theorem~\ref{thm:bern} by $S^{2}$ and writing $u=aS$, so that $q=e^{-u}$ as $\alpha\to0$, gives the per-RE quantity $J(u)\triangleq u^{2}/(e^{u}-1)$, and the frame information is $\sum_qM_qJ(a_qS)$. A group therefore contributes only where its gain matches the aggregate: $J$ vanishes as $u\to0$ and as $u\to\infty$, and peaks at $u=\lambda^\star$.

\begin{theorem}[Scale-integrated information, lower bound and near-achievability]
\label{thm:conserve}
Consider a non-adaptive Bernoulli design with $\alpha=0$ in which RE $m$ uses gain $a_m$, and let $f$ be the empirical distribution of $\ln a_m$ over the frame. Let $I(s)$ denote the per-RE Fisher information about $\ln S$ evaluated at $S=e^{s}$. Then:
\begin{enumerate}
\item[(i)] for every $f$,
\begin{equation}
\int_{\R}I(s)\,ds=\int_0^\infty J(u)\,\frac{du}{u}=\Gamma(2)\zeta(2)=\frac{\pi^{2}}{6};
\label{eq:conserve}
\end{equation}
\item[(ii)] consequently, over the deployment range $[S_{\min},S_{\max}]$ of dynamic range $D$,
\begin{equation}
\min_{S\in[S_{\min},S_{\max}]}I\;\le\;\frac{\pi^{2}}{6\ln D},
\label{eq:floor}
\end{equation}
so every regular estimator that is unbiased for $\ln S$ has a worst-case relative RMSE over the range of at least $\pi^{-1}\sqrt{6\ln D/M}$;
\item[(iii)] a multi-gain design with geometrically spaced gains and equal allocation achieves, away from finite-range edge effects,
\begin{equation}
\rho(S)=\frac{1}{\pi}\sqrt{\frac{6\,Q\ln\beta}{M}}\big[1+O(\delta_\beta)+o_M(1)\big],
\label{eq:ladderlaw}
\end{equation}
where the log-periodic term satisfies $\delta_2=6.5\times10^{-5}$ and remains at most about $2\times10^{-2}$ up to $\beta=4$. Thus the design approaches the lower bound when $Q\ln\beta=\ln D+o(\ln D)$.
\end{enumerate}
\end{theorem}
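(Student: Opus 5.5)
The plan is to work throughout with the per-RE information about $\ln S$. From Theorem~\ref{thm:bern}, RE $m$ with gain $a_m$ carries information $J(a_m S)=u_m^{2}/(e^{u_m}-1)$, $u_m=a_mS$, about $\ln S$. Hence $I(s)=\int J(e^{s+t})\,f(dt)$, where $t=\ln a$ is distributed according to $f$.

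For (i), integrate over $s\in\R$ and exchange the order of integration by Tonelli, which is allowed because $J\ge0$. For each fixed $t$, the substitution $u=e^{s+t}$ gives $ds=du/u$, so each RE contributes the same value $\int_0^\infty u/(e^{u}-1)\,du$, independent of $t$. Since $f$ is a probability measure, the total equals this integral. Expanding $1/(e^{u}-1)=\sum_{n\ge1}e^{-nu}$ and integrating term by term (monotone convergence) gives $\sum_n 1/n^{2}=\Gamma(2)\zeta(2)=\pi^{2}/6$.

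For (ii), note that $I\ge0$. Therefore
\[
\ln D\cdot\min_{s\in[\ln S_{\min},\ln S_{\max}]}I(s)\le\int_{\ln S_{\min}}^{\ln S_{\max}}I\,ds\le\pi^{2}/6 .
\]
The frame information is $M$ times the per-RE average. The Cram\'er--Rao bound for $\ln S$ then gives variance at least $6\ln D/(\pi^{2}M)$ at the worst point. The error in $\ln S$ equals the relative error to first order, which gives the stated relative-RMSE bound. Here I would state explicitly that "regular unbiased for $\ln S$" is exactly the class to which the CRB applies.

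For (iii), take gains $a_q=a_0\beta^{q}$ with weight $1/Q$ each. Away from the edges, $I(s)=Q^{-1}\sum_{q\in\mathbb Z}J(e^{s+\ln a_0+q\ln\beta})$ is a $\ln\beta$-periodic function, because the ladder effectively extends over the relevant range. Poisson summation gives $\sum_q J(e^{v+qh})=h^{-1}\sum_{k}\hat g(2\pi k/h)e^{2\pi ikv/h}$ with $g(v)=J(e^{v})$ and $h=\ln\beta$. The zeroth term is $h^{-1}\int g=\pi^{2}/(6\ln\beta)$ by (i). The remaining terms define the log-periodic ripple $\delta_\beta$. Since $g$ is analytic in a strip, $\hat g(\omega)$ decays exponentially. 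Indeed $\hat g(\omega)=\Gamma(2-i\omega)\zeta(2-i\omega)$ via the Mellin transform, and $|\Gamma(2-i\omega)|\sim e^{-\pi|\omega|/2}$. At $\beta=2$ this gives $\omega=2\pi/\ln 2\approx9.06$, so the relative ripple is of order $|\Gamma(2-9.06i)\zeta|/\zeta(2)\approx10^{-5}$–$10^{-4}$, matching $6.5\times10^{-5}$. At $\beta=4$ we have $\omega\approx4.53$, giving about $10^{-2}$. I would evaluate these numerically from the Gamma–zeta expression.

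Then $I(s)=\frac{\pi^{2}}{6Q\ln\beta}(1+O(\delta_\beta))$. Asymptotic efficiency of the joint MLE (Proposition~\ref{prop:concave} gives uniqueness; standard MLE asymptotics apply to the independent binomial counts with a fixed number of groups) yields $\rho=(MI)^{-1/2}(1+o_M(1))$, which is \eqref{eq:ladderlaw}. Comparing with (ii) under $Q\ln\beta=\ln D+o(\ln D)$ shows that the design approaches the bound.

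The main obstacle is (iii): justifying the infinite-ladder approximation with edge effects quantified, and turning the Fourier tail into the concrete numerical constants for $\delta_\beta$. I would handle the edges using the two guard groups, together with the exponential decay of $J$ at both ends ($J\sim u^{2}$ as $u\to0$ and $J\sim u^{2}e^{-u}$ as $u\to\infty$). These show that omitted rungs beyond the guards contribute negligibly away from the endpoints.
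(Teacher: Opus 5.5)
Your proposal is correct and follows essentially the same route as the paper: Tonelli plus the Bose/Mellin integral for (i), the average-dominates-minimum argument plus the Cram\'er--Rao bound for (ii), and the Fourier expansion of the $\ln\beta$-periodized profile with coefficients $\Gamma(2-i\omega_n)\zeta(2-i\omega_n)$ plus asymptotic efficiency of the MLE for (iii), where the paper simply excludes edge effects by hypothesis instead of quantifying them. One small slip in your edge remark: $J(u)=u^{2}/(e^{u}-1)\sim u$, not $u^{2}$, as $u\to0$, so omitted low-gain rungs decay geometrically with ratio $1/\beta$ per rung. This is still harmless because the statement excludes the edges.
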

\begin{IEEEproof}
See Appendix~\ref{app:conserve}.
\end{IEEEproof}

Expression \eqref{eq:ladderlaw} depends on the design only through $Q\ln\beta$. Gains $a_0\beta^{q}$ for $q=0,\dots,Q-1$ span a ratio $\beta^{Q-1}$, so covering $D$ needs $Q\ge\log_\beta D+1$; adding one guard group at each edge and setting $\beta=2$ gives $Q=\lceil\log_2D\rceil+3$. For $D=10^{3}$ this gives $Q=13$ and the coefficient $2.341/\sqrt M$.

The bound in \eqref{eq:floor} concerns uniform accuracy over a range, not peak accuracy at one operating point. A design that concentrates all its gains at a single scale attains $J(\lambda^\star)=0.648$ nats per RE, which is well above $\pi^{2}/(6\ln D)$, but only in a narrow neighborhood of one value of $S$. The bound also assumes $\alpha=0$: for $\alpha>0$ the constant in \eqref{eq:conserve} becomes the dilogarithm $\mathrm{Li}_2(1-\alpha)<\pi^{2}/6$, so false alarms only reduce the available information and \eqref{eq:floor} is the most favourable case. It further assumes a gain distribution fixed in advance, so it does not apply to a two-phase scheme that first estimates $S$ coarsely; Section~\ref{sec:blind} uses this fact. What \eqref{eq:floor} establishes is the worst-case coefficient $\pi^{-1}\sqrt{6\ln D}$ for non-adaptive independent-gain designs. For $D=10^{3}$ this coefficient is $2.049$, so the guarded geometric design above is within about $14\%$ of the bound; the factor $1.884$ relative to the calibrated optimum is the cost of that particular guarded design, not a universal lower bound.

Since \eqref{eq:floor} is a per-RE Fisher bound, it assumes that the REs are conditionally independent given $S$. The next section removes this assumption.
\vspace{0pt}
\section{Balanced Occupancy Computation}
\label{sec:boc}

\subsection{Quota encoding}
Conditional independence of the REs constrains only how each resource element is decided, and leaves free how a device spreads its transmissions over the frame. We use that freedom: the data determine the number of placement draws, and the local randomness determines only their positions.

Fix a quota gain $a>0$. Device $k$ forms the integer quota
\begin{equation}
n_k=\lfloor ax_k\rfloor+\mathrm{Bern}\big(ax_k-\lfloor ax_k\rfloor\big)
\label{eq:quota}
\end{equation}
and draws $n_k$ RE indices independently and uniformly with replacement from $\{1,\dots,M\}$, activating every distinct selected RE with a constant-envelope burst. Devices with $x_k=0$ remain silent. No exponential mapping and no per-RE random variable are involved, and the transmitter performs $n_k$ index draws instead of $M$ Bernoulli draws. Randomized rounding makes \eqref{eq:quota} conditionally unbiased, with $\E[n_k]=ax_k$ and variance $\sigma_{r,k}^{2}=\phi_k(1-\phi_k)\le1/4$, where $\phi_k$ is the fractional part of $ax_k$. When the fractional parts are spread uniformly over $[0,1)$, $\bar\sigma_r^{2}=K^{-1}\sum_k\sigma_{r,k}^{2}\to\E[\phi(1-\phi)]=1/6$.

If an index is drawn more than once by the same device, one physical burst on that RE is sufficient because the receiver observes only occupancy; hence the number of distinct transmitted bursts is no larger than $n_k$. In Bernoulli activation the data instead determine a per-RE activation probability, and the number of transmitted bursts is itself a random variable, with variance $\sum_kMp_k(1-p_k)$, that carries no information about $S$. This difference is what places the design outside Theorem~\ref{thm:conserve}: with the total $N=\sum_kn_k$ fixed by the quotas, the $M$ occupancy indicators are exchangeable but no longer independent, since they are produced by a shared pool of $N$ draws, and the per-RE information bound of \eqref{eq:floor} does not apply.
\vspace{-10pt}
\subsection{Exact identity and estimator}
\begin{lemma}[Exact quota superposition]
	\label{lem:qidentity}
	Condition on the quotas $\{n_k\}$, so that $N=\sum_kn_k$ is fixed. With bursts placed independently and uniformly and with independent false alarms,
\begin{equation}
\Prb\{B_m=0\mid N\}=(1-\alpha)\Big(1-\frac1M\Big)^{N}
\label{eq:qidentity}
\end{equation}
exactly for every finite $K$ under the ideal-detection abstraction, independently of the received amplitudes and phases.
\end{lemma}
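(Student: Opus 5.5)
Conditioned on the quotas, the proof mirrors that of Lemma~\ref{lem:prod}, with the pool of placement draws playing the role of the per-RE coins. Fix $m$ and condition on $\{n_k\}$, so that the total $N=\sum_k n_k$ is deterministic. Under ideal detection, RE $m$ is read as idle exactly when two things hold: no device places a burst on it, and no false alarm occurs there.

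\emph{Step 1: factor out the false alarm.} The false-alarm event on RE $m$ is generated by the receiver noise $z_m$. It is independent of the devices' placement randomness, so the idle probability factors as $(1-\alpha)$ times the probability that RE $m$ is unselected.

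\emph{Step 2: compute the no-selection probability.} Device $k$ performs $n_k$ independent uniform draws with replacement from $\{1,\dots,M\}$. Each draw misses $m$ with probability $1-1/M$. Since draws are independent within and across devices, the probability that all $N$ draws miss $m$ is $(1-1/M)^{N}$.

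Two points need checking here:
\begin{itemize}
\item Collapsing repeated indices into one physical burst does not change the event ``RE $m$ activated''. That event is the union over draws of ``draw equals $m$''.
\item The result depends on the quotas only through $N$. This is the analogue of $e^{-aS}$ in \eqref{eq:identity}.
\end{itemize}
Multiplying by the factor from Step 1 gives \eqref{eq:qidentity}.

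\emph{Step 3: amplitudes and finite $K$.} Under the ideal-detection abstraction, $B_m$ is a function only of the activity set $\mathcal A_m$ and of the false-alarm indicator. The channel coefficients $h_{k,m}$ in \eqref{eq:rxsignal} therefore never enter, so the identity holds for any amplitudes and phases. No asymptotics were used, so it is exact for every finite $K$ and $M$.

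The only step requiring care is the independence structure. One must argue that the randomized rounding in \eqref{eq:quota}, once conditioned upon, does not correlate with the placement draws. One must also argue that the false alarms are independent of both. Both follow because these are separate local sources of randomness at the devices and at the receiver. The $B_m$ are not independent across $m$, but the lemma is a marginal statement and does not need that.
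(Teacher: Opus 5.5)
Your proposal is correct and follows essentially the same route as the paper's proof: each of the $N$ placement draws independently misses RE $m$ with probability $1-1/M$, repeated selections do not change occupancy, and the independent false-alarm event contributes the factor $1-\alpha$. Your extra remarks, namely that the rounding randomness is independent of the placements and that the identity is only a marginal statement about each RE, fill in details the paper leaves implicit; they do not change the argument.
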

\begin{IEEEproof}
Each placement draw independently misses a given RE with probability $1-1/M$; repeated selections do not change occupancy, and an RE read as idle additionally requires that no false alarm occurs.
\end{IEEEproof}

The occupancy observable depends on the data only through $N$, with $\E[N]=aS$ and $\Var(N)=\sum_k\sigma_{r,k}^{2}\le K/4$. Since $N_0/M$ estimates $\Prb\{B_m=0\mid N\}$, inverting \eqref{eq:qidentity} for $N$ and dividing by $a$ gives the estimator
\begin{equation}
\Shat=\frac1a\cdot\frac{\log\!\Big(\min\!\left\{1,\max(N_0,1)/(M(1-\alpha))\right\}\Big)}{\log(1-1/M)}.
\label{eq:qest}
\end{equation}
Given $N$, the probability of an observed busy/idle pattern is invariant under permutation of the REs, and therefore depends on the pattern only through $N_0$. By the factorization criterion, $N_0$ is sufficient for $N$. Algorithm~\ref{alg:boc} summarizes the operation of one frame.

\begin{algorithm}[t]
\caption{Balanced occupancy computation (one frame)}
\label{alg:boc}
\begin{algorithmic}[1]
	\REQUIRE frame length $M$, quota gain $a$, false-alarm rate $\alpha$
	\STATE each device $k$: form $n_k$ by \eqref{eq:quota}, draw $n_k$ indices i.i.d.\ uniformly on $\{1,\dots,M\}$, and activate each distinct selected RE
	\STATE server: threshold each RE, count the idle REs $N_0$, and output $\Shat$ by \eqref{eq:qest}
\end{algorithmic}
\end{algorithm}
\vspace{-6pt}
\subsection{Error expression and operating point}
Let $\lambda\triangleq\E[N]/M=aS/M$ denote the mean frame load. In \eqref{eq:encode} the gain has units of nats per unit value, whereas in \eqref{eq:quota} it has units of bursts per unit value, so the two gains are not numerically comparable. The idle exponent is, however, common to both: the realized exponent of \eqref{eq:qidentity} at $\alpha=0$ is $-N\log(1-1/M)=(N/M)(1+O(M^{-1}))$, and $N/M=\lambda+O_p(\sqrt K/M)$ by \eqref{eq:quota}, so $\lambda$ matches the quantity used in Section~\ref{sec:scale} up to terms absorbed in the remainder of Theorem~\ref{thm:qlaw}. The encoders are therefore compared at equal $\lambda$.

\begin{theorem}[Error expression for quota encoding]
\label{thm:qlaw}
Assume ideal detection, $K/M=O(1)$, $Me^{-\lambda}\to\infty$ and $M\lambda^{2}\to\infty$. The last two make the truncation in \eqref{eq:qest} inactive with probability tending to one and keep the $O(1)$ remainder of Appendix~\ref{app:qlaw} negligible, while allowing $\lambda\to0$. Then $\Shat/S\to1$ in probability and
\begin{equation}
\rho^{2}=\frac{1}{M\lambda^{2}}\Big[\underbrace{e^{\lambda}-1-\lambda}_{\text{collision noise}}+\underbrace{\frac{K\bar\sigma_r^{2}}{M}}_{\text{quota rounding}}\Big]+o(M^{-1}).
\label{eq:qlaw}
\end{equation}
A nonzero false-alarm rate adds the term $\alpha e^{\lambda}/(1-\alpha)$ inside the bracket. At $\alpha=10^{-3}$ and $\lambda\simeq0.6$ this equals $1.8\times10^{-3}$, against a collision term of $0.22$, and it is therefore included in the simulations but omitted from the design expressions below.
\end{theorem}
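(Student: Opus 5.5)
The proof will go through the law of total variance, conditioning on the quota total $N$. Write $\Shat=\widehat N/a$, where $\widehat N$ is the inversion of \eqref{eq:qidentity} applied to $N_0$. Then
\begin{equation*}
\Shat-S=\frac{1}{a}\big[(\widehat N-N)+(N-aS)\big].
\end{equation*}
The second bracket is the quota-rounding fluctuation. By \eqref{eq:quota} it has mean zero and variance $\sum_k\sigma_{r,k}^2=K\bar\sigma_r^2$, and the quotas are independent of the placement randomness. Since $\E[\widehat N-N\mid N]$ will be shown to be lower order, the cross term vanishes to leading order. The rounding term then contributes $K\bar\sigma_r^2/(a^2S^2)=K\bar\sigma_r^2/(M^2\lambda^2)$ to $\rho^2$, which is the second term in the bracket. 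Consistency $\Shat/S\to1$ will follow from the same decomposition, since both terms are $o_p(aS)$ under $M\lambda^2\to\infty$.

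The core step is the conditional law of $N_0$ given $N$. This is the classical occupancy problem: $N$ balls are thrown into $M$ urns, and we need the mean and variance of the number of empty urns. Using indicators $I_m$ of emptiness, we have $\E[N_0\mid N]=M(1-1/M)^N$. For $m\neq m'$, $\E[I_mI_{m'}\mid N]=(1-2/M)^N$, so
\begin{equation*}
\Var(N_0\mid N)=M(1-\tfrac1M)^N+M(M-1)(1-\tfrac2M)^N-M^2(1-\tfrac1M)^{2N}.
\end{equation*}
With $N/M=\lambda+O_p(\sqrt K/M)$, expanding the logarithms to second order gives the standard result
\begin{equation*}
\Var(N_0\mid N)=Me^{-\lambda}\big(1-(1+\lambda)e^{-\lambda}\big)(1+o(1)).
\end{equation*}
Compared with the binomial variance $Me^{-\lambda}(1-e^{-\lambda})$ of Bernoulli activation, the negative dependence induced by the fixed pool of draws removes the $\lambda e^{-2\lambda}$ term. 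That term is exactly the placement-count fluctuation.

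The delta method then converts this to $\widehat N$. The map $N_0\mapsto\log(N_0/M)/\log(1-1/M)$ has derivative approximately $-M/N_0\approx-e^{\lambda}$ at the mean. Hence
\begin{equation*}
\Var(\widehat N\mid N)\approx e^{2\lambda}\,Me^{-\lambda}\big(1-(1+\lambda)e^{-\lambda}\big)=M\big(e^\lambda-1-\lambda\big).
\end{equation*}
Dividing by $a^2S^2=M^2\lambda^2$ yields the collision term. The false-alarm correction comes from redoing the variance with idle probability $(1-\alpha)(1-1/M)^N$ and keeping the first order in $\alpha$, which gives the additive $\alpha e^\lambda/(1-\alpha)$.

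The main obstacle is making the delta method rigorous in the regime where $\lambda\to0$ is allowed. The remainder must be controlled in mean square, not just in distribution, and the clipping in \eqref{eq:qest} must be shown to be negligible. For clipping, I would show via a Chebyshev or Bernstein bound on $N_0$ that the events $N_0=0$ and $N_0>M(1-\alpha)$ have vanishing probability. This uses $Me^{-\lambda}\to\infty$ for the first event, and for the second the fact that the standard deviation of $M-N_0$ is of order $\sqrt{M\lambda}$, small compared with its mean $M\lambda$. Since the estimator is bounded by $O(\log M)/a$ on those events, their contribution is negligible. For the second-order Taylor remainder, I would bound the relative error of $N_0$ around its mean by its fourth central moment, which is available from occupancy moment formulas or a negative-association concentration bound. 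This should give a remainder of order $O(1)$ in $\widehat N$, hence $O(1/(M\lambda)^2)=o(M^{-1})$ in $\rho^2$, by $M\lambda^2\to\infty$.
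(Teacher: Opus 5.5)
Your proposal follows essentially the same route as the paper's proof in Appendix~\ref{app:qlaw}. You condition on $N$, obtain $\Var(N_0\mid N)=Me^{-\lambda}[1-(1+\lambda)e^{-\lambda}]+O(1)$ from the pair probability $(1-2/M)^{N}$, apply the delta method with slope $M/N_0\approx e^{\lambda}$ to get $M(e^{\lambda}-1-\lambda)$, add the independent rounding variance $\sum_k\sigma_{r,k}^{2}$ by total variance, and handle false alarms by replacing $q$ with $(1-\alpha)q$; that replacement gives $\alpha e^{\lambda}/(1-\alpha)$ exactly, with no first-order expansion in $\alpha$ needed.

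One slip in your rigor plan concerns the truncation. On $\{N_0=0\}$ the truncated estimate is $\log(M(1-\alpha))/(a|\log(1-1/M)|)\approx M\log M/a$, not $O(\log M)/a$, so its relative squared error is about $(\log M/\lambda)^{2}$. A Chebyshev bound of order $1/M$ on the clipping probabilities is therefore too weak for an $o(M^{-1})$ remainder. You need an exponential tail, for example the negative-association bound $\Prb\{N_0=0\}\le(1-q)^{M}$ used in the paper, or McDiarmid as in Appendix~\ref{app:finite}.
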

\begin{IEEEproof}
See Appendix~\ref{app:qlaw}.
\end{IEEEproof}

\begin{corollary}[Operating point and finite-population expression]
\label{cor:load}
Under the assumptions of Theorem~\ref{thm:qlaw} with $\bar\sigma_r^{2}=1/6$, the optimal load solves
\begin{equation}
e^{\lambda}(\lambda-2)+\lambda+2=\frac{K}{3M},
\label{eq:stat}
\end{equation}
whose positive root satisfies $\lambda^\star=\lambda_0(1+O(\lambda_0))$ with $\lambda_0=(2K/M)^{1/3}$, and
\begin{equation}
\rho\sqrt M=\Big(\tfrac12+\tfrac{\lambda_0}{4}+\tfrac{\lambda_0^{2}}{24}\Big)^{1/2}+O(\lambda_0^{3}),
\label{eq:finiteK}
\end{equation}
so that $\rho\to1/\sqrt{2M}$ as $K/M\to0$.
\end{corollary}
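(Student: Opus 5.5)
The proof is single-variable calculus applied to the leading term of \eqref{eq:qlaw}. Set $\bar\sigma_r^{2}=1/6$ and $\epsilon\triangleq K/(6M)$, and minimize
\begin{equation*}
F(\lambda)=\frac{e^{\lambda}-1-\lambda+\epsilon}{\lambda^{2}},\qquad M\rho^{2}=F(\lambda)+o(1).
\end{equation*}
Differentiating gives
\begin{equation*}
\lambda^{3}F'(\lambda)=\lambda(e^{\lambda}-1)-2(e^{\lambda}-1-\lambda+\epsilon).
\end{equation*}
Rearranging, $F'=0$ is equivalent to $e^{\lambda}(\lambda-2)+\lambda+2=2\epsilon=K/(3M)$, which is exactly \eqref{eq:stat}.

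Next I will show the positive root is unique and is a minimum. Let $h(\lambda)=e^{\lambda}(\lambda-2)+\lambda+2$. Then $h(0)=0$, $h'(\lambda)=e^{\lambda}(\lambda-1)+1$ with $h'(0)=0$, and $h''(\lambda)=\lambda e^{\lambda}>0$. So $h$ is strictly increasing on $(0,\infty)$ and grows without bound. Hence $h(\lambda)=2\epsilon$ has exactly one positive root. Since $\lambda^{3}F'=h-2\epsilon$ changes sign from negative to positive there, the root is the global minimizer of $F$ on $(0,\infty)$.

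For the asymptotics, Taylor-expand $h$: $h(\lambda)=\lambda^{3}/6+\lambda^{4}/12+O(\lambda^{5})$. Setting this equal to $K/(3M)$ gives $\lambda^{3}(1+\lambda/2+O(\lambda^{2}))=2K/M=\lambda_0^{3}$. Therefore $\lambda^\star=\lambda_0(1-\lambda_0/6+O(\lambda_0^{2}))=\lambda_0(1+O(\lambda_0))$.

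To evaluate the minimum, I use the stationarity condition to remove $\epsilon$: at the optimum, $2(e^{\lambda}-1-\lambda+\epsilon)=\lambda(e^{\lambda}-1)$. This gives
\begin{equation*}
F(\lambda^\star)=\frac{e^{\lambda^\star}-1}{2\lambda^\star}=\frac12+\frac{\lambda^\star}{4}+\frac{\lambda^{\star2}}{12}+O(\lambda^{\star3}).
\end{equation*}
Substituting $\lambda^\star=\lambda_0-\lambda_0^{2}/6+O(\lambda_0^{3})$ gives
\begin{equation*}
\frac12+\frac{\lambda_0}{4}-\frac{\lambda_0^{2}}{24}+\frac{\lambda_0^{2}}{12}+O(\lambda_0^{3})=\frac12+\frac{\lambda_0}{4}+\frac{\lambda_0^{2}}{24}+O(\lambda_0^{3}).
\end{equation*}
Taking the square root yields \eqref{eq:finiteK}. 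Letting $K/M\to0$ sends $\lambda_0\to0$, so $\rho\sqrt M\to1/\sqrt2$.

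I expect the main obstacle to be checking that Theorem~\ref{thm:qlaw} actually applies along this optimal sequence, so that its $o(M^{-1})$ remainder is uniform and does not compete with the expanded terms. The theorem requires $M\lambda^{2}\to\infty$. With $\lambda^\star\approx(2K/M)^{1/3}$ this becomes $M^{1/3}K^{2/3}\to\infty$, which I will state as the implicit regime, with $Me^{-\lambda}\to\infty$ automatic since $\lambda^\star$ is bounded. I will also make the $O(\lambda_0^{3})$ term precise by keeping the next Taylor coefficient of $h$ when inverting for $\lambda^\star$.
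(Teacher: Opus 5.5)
Your proof is correct. It coincides with the paper's for the first half: differentiating, obtaining \eqref{eq:stat}, and inverting $h(\lambda)=\lambda^3/6+\lambda^4/12+O(\lambda^5)$ to get $\lambda_0$. You also supply the uniqueness and global-minimality argument ($h(0)=h'(0)=0$, $h''=\lambda e^{\lambda}>0$). The paper's proof of this corollary leaves that implicit; it uses the same function only later, when choosing the reference quota. To make your inversion fully airtight, note that $h(\lambda)=\sum_{n\ge3}(n-2)\lambda^n/n!\ge\lambda^3/6$. This gives $\lambda^\star\le\lambda_0$ before you expand.

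Where you genuinely differ is in evaluating the minimum value.

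\emph{Paper's route.} The paper evaluates $G$ at the explicit point $\lambda_0$, using $c=\lambda_0^3/12$, which gives $G(\lambda_0)=\tfrac12+\tfrac{\lambda_0}{4}+\tfrac{\lambda_0^2}{24}+O(\lambda_0^3)$. It then bounds $0\le G(\lambda_0)-G(\lambda^\star)=\tfrac12G''(\xi)(\lambda_0-\lambda^\star)^2$, with $G''=O(\lambda_0^{-1})$ coming from the $c/\lambda^2$ term. Because the value at a minimizer is flat to first order, the paper needs only the crude location estimate $\lambda^\star-\lambda_0=O(\lambda_0^2)$. The price is a curvature bound on the interval between $\lambda^\star$ and $\lambda_0$. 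The by-product is practical: the explicit load $\lambda_0$ is itself optimal up to $O(\lambda_0^3)$, so a designer need not solve \eqref{eq:stat} at all.

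\emph{Your route.} You substitute the stationarity condition to obtain the exact identity $F(\lambda^\star)=(e^{\lambda^\star}-1)/(2\lambda^\star)$ and then expand it. This identity is first-order sensitive to $\lambda^\star$, so you need the root to second order, $\lambda^\star=\lambda_0-\lambda_0^2/6+O(\lambda_0^3)$. You computed that correctly, and the $-\lambda_0^2/24$ it contributes is exactly what turns $\lambda_0^2/12$ into $\lambda_0^2/24$. In return you need no estimate of $G''$. You also get a closed form for the optimal value that can be pushed to any order, or evaluated numerically once \eqref{eq:stat} is solved.

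Your remark that the hypothesis $M\lambda^2\to\infty$ of Theorem~\ref{thm:qlaw} becomes $M^{1/3}K^{2/3}\to\infty$ along the optimal sequence is a fair clarification. The paper leaves this implicit in ``under the assumptions of Theorem~\ref{thm:qlaw}''.
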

\begin{IEEEproof}
Setting the derivative of $G(\lambda)\triangleq(e^{\lambda}-1-\lambda+c)/\lambda^{2}$ to zero with $c=K\bar\sigma_r^{2}/M$ gives \eqref{eq:stat}. Since $e^{\lambda}(\lambda-2)+\lambda+2=\lambda^{3}/6+\lambda^{4}/12+O(\lambda^{5})$, the leading root of $\lambda^{3}/6=2c=\lambda_0^{3}/6$ is $\lambda_0$, and $\lambda^\star-\lambda_0=O(\lambda_0^{2})$. Evaluating $G$ at $\lambda_0$ with $c=\lambda_0^{3}/12$ gives $G(\lambda_0)=\tfrac12+\tfrac{\lambda_0}{6}+\tfrac{\lambda_0^{2}}{24}+\tfrac{\lambda_0}{12}+O(\lambda_0^{3})$, which is the square of \eqref{eq:finiteK}. Since $\lambda^\star$ minimizes $G$, we have $0\le G(\lambda_0)-G(\lambda^\star)=\tfrac12G''(\xi)(\lambda_0-\lambda^\star)^{2}$ with $G''=O(\lambda_0^{-1})$ in that interval, so the difference is $O(\lambda_0^{3})$.
\end{IEEEproof}

Expression \eqref{eq:finiteK} matches numerical minimization of \eqref{eq:qlaw} to within $1\%$ for $K/M\in[0.02,5]$. Compared with the Bernoulli constant $1.243/\sqrt M$, it predicts a reduction in RMSE of $1.76\times$ as $K/M\to0$, $1.53\times$ at $K/M=0.1$ and $1.32\times$ at $K/M\simeq1$, with corresponding reductions in the number of channel uses of $3.09\times$, $2.34\times$ and $1.75\times$. The asymptotic constants are therefore an upper envelope, and \eqref{eq:finiteK} is the expression that should be compared with simulation.

The residual term $e^{\lambda}-1-\lambda\to\lambda^{2}/2$ is the variance contributed by overlapping bursts, which under independent uniform placement is the dominant remaining error source once the transmitter-side randomness of Bernoulli activation has been removed. The two encoders also differ in transmit activity: at their respective optimal loads, the ratio of the Bernoulli ODC activity budget to the BOC activity budget grows as $(M/K)^{1/3}$. Because $n_k$ is fixed up to rounding, the BOC activity of a device is bounded in advance, which suits duty-cycle-limited radios; a Bernoulli count can fluctuate above its mean. If duty-cycle compliance forces a hard cap on $n_k$, large values are clipped and the aggregate is biased, a regime not analyzed here.

\begin{proposition}[Leading-order comparison with Bernoulli activation at equal load]
\label{prop:dominate}
At equal load $\lambda$ and under the assumptions of Theorem~\ref{thm:qlaw}, the leading $M^{-1}$ terms satisfy
\begin{equation}
\begin{aligned}
\rho_{\boc{}}^{2}(\lambda)-\rho_{\odc{}}^{2}(\lambda)
&=\frac{K\bar\sigma_r^{2}/M-\lambda}{M\lambda^{2}}+o(M^{-1}),\\
\rho_{\odc{}}^{2}(\lambda)
&=\frac{e^{\lambda}-1}{M\lambda^{2}}+o(M^{-1}).
\end{aligned}
\label{eq:dominate}
\end{equation}
Since $K\bar\sigma_r^{2}\le M\lambda$ for every value profile, the leading-order asymptotic RMSE of BOC is no larger than that of Bernoulli ODC at any $\lambda>0$, with a strict reduction in the leading coefficient whenever $S>0$.
\end{proposition}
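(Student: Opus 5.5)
Both leading-order error laws are already available, so the proof reduces to subtracting them and then checking the sign of the difference. For Bernoulli ODC I will take \eqref{eq:brmse} from Theorem~\ref{thm:bern} at $\alpha\to0$ and square it:
\[
\rho_{\odc{}}^{2}=\frac{e^{\lambda}-1}{M\lambda^{2}}+o(M^{-1}).
\]
This holds at fixed $\lambda>0$, where the frame load $\lambda=aS$ is the same idle exponent $-\ln\Prb\{B_m=0\}$. For BOC I will take \eqref{eq:qlaw} from Theorem~\ref{thm:qlaw}. The paragraph preceding that theorem establishes that the two definitions of $\lambda$ agree up to negligible remainders, so the two laws can be compared at the same $\lambda$.

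Subtracting the two laws gives
\[
\frac{(e^{\lambda}-1-\lambda)+K\bar\sigma_r^{2}/M-(e^{\lambda}-1)}{M\lambda^{2}}=\frac{K\bar\sigma_r^{2}/M-\lambda}{M\lambda^{2}},
\]
and both $o(M^{-1})$ remainders combine into one. This is \eqref{eq:dominate}. Note that at fixed $\lambda>0$ the hypotheses of Theorem~\ref{thm:qlaw}, namely $Me^{-\lambda}\to\infty$ and $M\lambda^{2}\to\infty$, hold automatically, so both laws are valid simultaneously.

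The remaining step is the inequality $K\bar\sigma_r^{2}\le M\lambda=aS$. Write $ax_k=\lfloor ax_k\rfloor+\phi_k$. Then
\[
\sigma_{r,k}^{2}=\phi_k(1-\phi_k)\le\phi_k\le ax_k .
\]
Summing over $k$ gives $K\bar\sigma_r^{2}=\sum_k\sigma_{r,k}^{2}\le a\sum_kx_k=aS=M\lambda$. The difference is therefore nonpositive for every value profile.

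For the strict claim when $S>0$, there are two cases. If some $\phi_k\in(0,1)$, then $\phi_k(1-\phi_k)<\phi_k$, so the inequality is strict for that $k$. If every $\phi_k=0$, then $\sum_k\sigma_{r,k}^{2}=0<aS$. Either way the leading coefficient of $\rho_{\boc{}}^{2}$ is strictly smaller than that of $\rho_{\odc{}}^{2}$. Taking square roots, which is monotone on nonnegative quantities, transfers the conclusion to the RMSE.

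The main obstacle is making sure the comparison at equal $\lambda$ is legitimate. The BOC law comes with its own asymptotic regime ($K/M=O(1)$ and the rounding term of order $K/M^{2}$), so I will confirm that this regime is compatible with the fixed-$\lambda$ regime of Theorem~\ref{thm:bern}. I will also check that the $O(\sqrt K/M)$ discrepancy between $N/M$ and $\lambda$ is already absorbed in the remainder of \eqref{eq:qlaw}, which is what the text before Theorem~\ref{thm:qlaw} asserts.
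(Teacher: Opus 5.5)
Your proposal is correct and follows essentially the same proof as the paper: subtract the squared Bernoulli law from \eqref{eq:qlaw} using $e^{\lambda}-1=(e^{\lambda}-1-\lambda)+\lambda$, then show $\sigma_{r,k}^{2}\le ax_k$ for each $k$ and sum. Your one-line chain $\phi_k(1-\phi_k)\le\phi_k\le ax_k$ is a more compact version of the paper's case split on $ax_k\ge1$ versus $ax_k<1$, and your two-case strictness argument is equivalent to the paper's observation that equality requires $ax_k=0$ for all $k$.
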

\begin{IEEEproof}
Subtract $(e^{\lambda}-1)/(M\lambda^{2})$ from \eqref{eq:qlaw} using $e^{\lambda}-1=(e^{\lambda}-1-\lambda)+\lambda$, which gives the leading term in \eqref{eq:dominate}. For the sign of that term, $\sigma_{r,k}^{2}=\phi_k(1-\phi_k)\le ax_k$ for every $k$: if $ax_k\ge1$ then $\phi_k(1-\phi_k)\le1/4<1\le ax_k$, and if $ax_k<1$ then $\phi_k=ax_k$ and $\phi_k(1-\phi_k)\le\phi_k=ax_k$. Summing, $K\bar\sigma_r^{2}\le aS=M\lambda$, with equality only if $ax_k=0$ for all $k$.
\end{IEEEproof}

\begin{remark}[Regime of validity]
\label{rem:regime}
When $\max_kax_k\le1$, the quota in \eqref{eq:quota} reduces to $\mathrm{Bern}(ax_k)$, which is a linear rather than an exponential activation mapping, and $K\bar\sigma_r^{2}=\E[N]-\sum_k(ax_k)^{2}$, which is the case of Proposition~\ref{prop:dominate} in which the gap $\sum_k(ax_k)^{2}$ is smallest. \boc{} therefore retains a strictly smaller leading-order error coefficient than Bernoulli activation, but the difference decreases as $\max_kax_k\to0$, which for value profiles without a dominant device means large $K/M$. The advantage is largest around $K\lesssim M$ and decreases smoothly beyond it, within the range $K/M=O(1)$ in which \eqref{eq:qlaw} was derived.
\end{remark}
\vspace{-10pt}
\subsection{A non-asymptotic guarantee}
Theorem~\ref{thm:qlaw} is asymptotic in $M$, whereas the identity \eqref{eq:qidentity} on which it is based is exact for every finite $K$ and $M$. The following bound holds for finite values of both, and it also accounts explicitly for the event $N_0=0$.

\begin{theorem}[Finite-frame deviation bound]
\label{thm:finite}
Assume ideal detection and $\alpha=0$, and let $q=(1-1/M)^{N}$. For every $t>0$, $\Prb\{|N_0-Mq|\ge t\mid N\}\le2e^{-2t^{2}/N}$. Consequently, for any $\delta\in(0,1)$, let $t_\delta=\sqrt{(N/2)\ln(4/\delta)}$ and $s_\delta=\sqrt{(K/2)\ln(4/\delta)}$. If $t_\delta<Mq$, then with probability at least $1-\delta$ we have $N_0>0$ and
\begin{equation}
\frac{|\Shat-S|}{S}\;\le\;\frac{M+1}{M\lambda}\cdot\frac{t_\delta/(Mq)}{1-t_\delta/(Mq)}\;+\;\frac{s_\delta}{M\lambda},
\label{eq:finitebound}
\end{equation}
which for fixed $\lambda$ and $M\to\infty$ behaves as $\frac{e^{\lambda}}{\lambda}\sqrt{\lambda\ln(4/\delta)/(2M)}+\frac{1}{\lambda}\sqrt{K\ln(4/\delta)/(2M^{2})}$. Since $t_\delta$ and $q$ are functions of $N$, the right-hand side of \eqref{eq:finitebound} is random; the union bound applies conditionally on $N$ to the first event and unconditionally to the second.
\end{theorem}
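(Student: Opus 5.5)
The concentration inequality comes from McDiarmid's bounded-difference inequality applied conditionally on $N$. Given $N$, the occupancy pattern, and hence $N_0$, is a function of the $N$ independent placement indices $I_1,\dots,I_N$, each uniform on $\{1,\dots,M\}$. Changing one index alters at most two REs: it can free the RE it used to hit and occupy a new one. The first change raises $N_0$ by at most one and the second lowers it by at most one, so the net change in $N_0$ lies in $\{-1,0,1\}$ and the bounded-difference constant is $c_i=1$. By Lemma~\ref{lem:qidentity} with $\alpha=0$, $\E[N_0\mid N]=Mq$. McDiarmid's inequality then gives $\Prb\{|N_0-Mq|\ge t\mid N\}\le2e^{-2t^{2}/N}$. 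With $t=t_\delta$ this probability is at most $\delta/2$.

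For the quotas, the $n_k-\lfloor ax_k\rfloor$ are independent Bernoulli variables in $[0,1]$ with $\E[N]=aS=M\lambda$. Hoeffding's inequality gives $\Prb\{|N-aS|\ge s_\delta\}\le2e^{-2s_\delta^{2}/K}=\delta/2$. A union bound, applied conditionally on $N$ for the first event and unconditionally for the second, gives both events together with probability at least $1-\delta$. On the first event, $N_0\ge Mq-t_\delta>0$, and $N_0\le Mq+t_\delta$.

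On this joint event I decompose
\[
|\Shat-S|\le|\Shat-N/a|+|N-aS|/a .
\]
The second term is at most $s_\delta/a$. Dividing by $S$ and using $aS=M\lambda$ turns it into $s_\delta/(M\lambda)$.

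For the first term, write $\Shat=\widehat N/a$ with $\widehat N=\log(\min\{1,N_0/M\})/\log(1-1/M)$, and note $N=\log q/\log(1-1/M)$. Set $L=-1/\log(1-1/M)$ and use $L\le M$. Then
\[
|\widehat N-N|\le M\,\bigl|\log(\min\{1,N_0/M\})-\log q\bigr| .
\]
Write $\epsilon=t_\delta/(Mq)<1$, so that $N_0/(Mq)\in[1-\epsilon,1+\epsilon]$. The clipping at $1$ moves the value toward $q\le1$, so it only shrinks the distance. The log deviation is therefore at most $\max\{-\log(1-\epsilon),\log(1+\epsilon)\}\le\epsilon/(1-\epsilon)$, using $-\log(1-\epsilon)\le\epsilon/(1-\epsilon)$. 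Dividing $M\epsilon/(1-\epsilon)$ by $aS=M\lambda$ gives essentially the first term of \eqref{eq:finitebound}. The factor $(M+1)/M$ absorbs the slack from bounding $L$; I will check whether $L\le M$ or $L\le M+1$ is the clean bound, since $-\log(1-1/M)\ge1/M$ already gives $L\le M$, and $(M+1)/M$ is then just a safe constant.

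For the asymptotics I take fixed $\lambda$ and $M\to\infty$. Then $q\to e^{-\lambda}$ and $N\approx M\lambda$, so $t_\delta/(Mq)\approx e^{\lambda}\sqrt{\lambda\ln(4/\delta)/(2M)}\to0$, and the first term behaves as $\frac{e^{\lambda}}{\lambda}\sqrt{\lambda\ln(4/\delta)/(2M)}$. The second term is exactly $\frac1\lambda\sqrt{K\ln(4/\delta)/(2M^{2})}$.

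The main obstacle is the bounded-difference step. I must justify carefully that moving a single draw changes the distinct-occupancy count by at most one. The rest is bookkeeping: the clipping cases, and the fact that $t_\delta$ and $q$ depend on $N$, which is why the union bound is taken conditionally for the first event.
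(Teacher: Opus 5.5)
Your proposal is correct and follows the paper's proof essentially step for step. The shared steps are McDiarmid's inequality conditional on $N$ with bounded-difference constants $c_i=1$, Hoeffding's inequality on the rounding terms $n_k-ax_k$, the triangle inequality through $N$, the bound $|\log(1+z)|\le|z|/(1-|z|)$, and a union bound at level $\delta/2$ per event. Your constant $1/|\log(1-1/M)|\le M$, which follows from $-\log(1-x)\ge x$, is valid and slightly sharper than the paper's $M+1$, so it implies the stated bound. The clipping at $1$ you were concerned about is never active, since $N_0\le M$ when $\alpha=0$.
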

\begin{IEEEproof}
See Appendix~\ref{app:finite}.
\end{IEEEproof}

The two terms in \eqref{eq:finitebound} correspond to the randomness of the burst placement and to the quota rounding, matching the two terms in \eqref{eq:qlaw}. The bound is looser than the asymptotic expression, giving about $4.3/\sqrt M$ at $K/M=0.1$ and $\delta=0.05$ against $1.6/\sqrt M$ for a Gaussian $95\%$ interval. The loss is the usual cost of a bounded-difference argument, and the bound holds for every $M$ and $K$.
\vspace{-5pt}
\section{Selecting the Gain Without Scale Knowledge}
\label{sec:blind}

Algorithm~\ref{alg:boc} takes the gain $a$ as an input, and by Corollary~\ref{cor:load} the appropriate value depends on $S$. Theorem~\ref{thm:conserve} applies only to non-adaptive designs, so the scale can be acquired first and the gain set from the result. We split the frame into two phases: $M_1$ REs that probe for $S$, and the remaining $M-M_1$ that carry the computation. Phase~1 uses a guarded multi-gain Bernoulli design with the estimator of Proposition~\ref{prop:concave}. The probe need only locate $S$ to within tens of percent, for the reason quantified by \eqref{eq:mismatch} below, so $M_1$ can be a few percent of $M$ and the number of groups far smaller than the $Q$ of \eqref{eq:ladderlaw}. If the gain is set for an aggregate $S/r$ rather than $S$, the actual load becomes $r\lambda^\star$. Under the design approximation $\bar\sigma_r^2\simeq1/6$, so that $c=K/(6M)$ is locally fixed, the squared-error ratio is
\begin{equation}
\frac{\rho^{2}(r\lambda^\star)}{\rho^{2}(\lambda^\star)}=\frac{\big(e^{r\lambda^\star}-1-r\lambda^\star+c\big)/(r\lambda^\star)^{2}}{\big(e^{\lambda^\star}-1-\lambda^\star+c\big)/\lambda^{\star2}},\quad c=\frac{K}{6M}.
\label{eq:mismatch}
\end{equation}
Evaluating \eqref{eq:mismatch} at $K/M=0.1$ gives an RMSE increase of $6.6\%$ for $r=2$ and $8.7\%$ for $r=1/2$. 

The server then broadcasts the scalar $a$ in the beacon it already transmits for slot timing, and Phase~2 runs Algorithm~\ref{alg:boc} on the remaining $M-M_1$ REs. No per-device feedback, instantaneous CSI or device identification is involved. The factor $1.884$ that the guarded design pays on every frame is thereby replaced by the probe overhead and one downlink broadcast. Without such a beacon, the broadcast is an added requirement.

When several short downlink turnarounds are available, Phase~1 can instead perform a bisection search on $\ln S$. The server probes with $a=\lambda^\star/\sqrt{LU}$ over a multiplicative interval $[L,U]$, initialized to $[S_{\min},S_{\max}]$, and compares the idle fraction with $e^{-\lambda^\star}=0.2032$. Since $e^{-aS}$ is monotone in $S$, the sign of this difference identifies the correct half of the interval, and each such decision replaces $U/L$ by its square root. A residual mismatch $r$ is therefore reached in $O(\log\log D)$ stages, which is four stages for $D=10^{4}$ and $r=2$. All probe counts still enter \eqref{eq:ll}, so no probe RE is discarded. This variant replaces one broadcast by several and is preferable only when the downlink latency is small.
\vspace{-5pt}
\section{Imperfect Detection}
\label{sec:det}
\vspace{-3pt}
\subsection{The erasure model and its validity}
Theorems~\ref{thm:qlaw} and~\ref{thm:finite} assume that every burst is detected. A burst in a deep fade may fail the threshold test, and since the estimator counts idle REs, an undetected burst is indistinguishable from a device that never transmitted. We therefore model detection at the level of \emph{effective activations}. An activation by device $k$ is effective with probability $1-\varepsilon_k$, independently across REs, across bursts of the same device, and across devices, where $\varepsilon_k$ is the per-burst miss probability determined by the average receive SNR $\bar\gamma_k=\E|h_{k,m}|^{2}|\psi|^{2}/\sigma_z^{2}$ of device $k$, the threshold $\tau_\alpha$ and the fading distribution. Independence across a device's own bursts assumes the frame is interleaved over enough coherence dimensions that scattered REs fade independently; uniform placement facilitates this but does not by itself guarantee it. If the bursts instead shared one coherence block, $\varepsilon_k$ would act on the device as a whole rather than per burst. An RE is read as busy if it carries an effective activation or a false alarm. For later use, define the data-weighted miss moments
\begin{equation}
\varepsilon_x\triangleq\frac{\sum_kx_k\varepsilon_k}{S},\qquad
\varepsilon_{2,x}\triangleq\frac{\sum_kx_k\varepsilon_k^2}{S},
\label{eq:missmom}
\end{equation}
and the unweighted moments $\varepsilon_u=K^{-1}\sum_k\varepsilon_k$ and $\varepsilon_{2,u}=K^{-1}\sum_k\varepsilon_k^2$.

\begin{lemma}[Effect of independent misses]
\label{lem:miss}
Under Bernoulli activation,
\begin{equation}
\Prb\{B_m=0\}=(1-\alpha)\prod_k\big[\varepsilon_k+(1-\varepsilon_k)e^{-ax_k}\big],
\label{eq:bernmiss}
\end{equation}
so the estimator \eqref{eq:bestimator}, applied without correction, converges to
\begin{equation}
\begin{aligned}
S_{B,\varepsilon}(a)
&=-\frac1a\sum_k\log\!\big[\varepsilon_k+(1-\varepsilon_k)e^{-ax_k}\big]\\
&=\sum_k(1-\varepsilon_k)x_k+O\!\Big(a\sum_kx_k^2\Big).
\end{aligned}
\label{eq:bernmisslimit}
\end{equation}
when $\max_k ax_k\to0$. Under quota encoding with $\max_kn_k=o(M)$, independent thinning gives the asymptotic target $S_\varepsilon=\sum_k(1-\varepsilon_k)x_k$. Let $\lambda_e=aS_\varepsilon/M=\lambda(1-\varepsilon_x)$ and $\rho_e^2\triangleq\E[(\Shat-S_\varepsilon)^2]/S_\varepsilon^2$. Under the scaling of Theorem~\ref{thm:qlaw},
\begin{equation}
\rho_e^{2}=\frac{1}{M\lambda_e^{2}}\Big[e^{\lambda_e}-1-\lambda_e+\frac{\Sigma_r}{M}+\lambda(\varepsilon_x-\varepsilon_{2,x})\Big]+o(M^{-1}),
\label{eq:thin}
\end{equation}
where $\Sigma_r=\sum_k(1-\varepsilon_k)^{2}\sigma_{r,k}^{2}$.
\end{lemma}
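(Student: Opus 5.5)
The plan has three parts, in the order of the statement: the Bernoulli identity, its uncorrected limit, and the thinned quota law.

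\emph{Bernoulli identity.} I would reuse the proof of Lemma~\ref{lem:prod}. On RE $m$, device $k$ leaves no effective activation if it does not transmit, with probability $e^{-ax_k}$, or if it transmits and is missed, with probability $(1-e^{-ax_k})\varepsilon_k$. The sum is $\varepsilon_k+(1-\varepsilon_k)e^{-ax_k}$. Independence across devices and of the false alarm then gives \eqref{eq:bernmiss}.

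\emph{Uncorrected Bernoulli limit.} Under \eqref{eq:bernmiss} the REs are still i.i.d. Hence $N_0/M$ converges almost surely to $\Prb\{B_m=0\}$, and by continuity of \eqref{eq:bestimator} the estimator converges to $-a^{-1}\log[\Prb\{B_m=0\}/(1-\alpha)]$, which is the first line of \eqref{eq:bernmisslimit}. For the second line I would Taylor-expand each term with $u=ax_k$:
\[
\varepsilon_k+(1-\varepsilon_k)e^{-u}=1-(1-\varepsilon_k)u+O(u^2).
\]
Taking $-\log$ gives $(1-\varepsilon_k)u+O(u^2)$, with constants uniform in $\varepsilon_k\in[0,1]$. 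Dividing by $a$ and summing over $k$ yields the stated remainder $O(a\sum_kx_k^2)$.

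\emph{Quota encoding: conditional structure.} I would condition on the quotas and on the miss indicators. Each of the $n_k$ draws of device $k$ is independently effective with probability $1-\varepsilon_k$, so the number of effective draws is $N_e=\sum_k\tilde n_k$ with $\tilde n_k\mid n_k\sim\mathrm{Bin}(n_k,1-\varepsilon_k)$. Effective draws remain i.i.d.\ uniform over the REs. One subtlety is that several draws of one device landing on the same RE produce one burst, whose miss event is shared. When $\max_kn_k=o(M)$, the expected number of such self-collisions is $\sum_kn_k^2/M$. I expect this to be absorbed into the remainder, but this step needs checking. Modulo that, Lemma~\ref{lem:qidentity} holds with $N$ replaced by $N_e$. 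The estimator therefore estimates $N_e/a$, whose mean is $S_\varepsilon$, and $\lambda_e=\E N_e/M=\lambda(1-\varepsilon_x)$.

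\emph{Quota encoding: variance law.} I would rerun the delta-method decomposition of Appendix~\ref{app:qlaw} with $N_e$ in place of $N$. The collision (placement) term is unchanged in form at load $\lambda_e$, giving $e^{\lambda_e}-1-\lambda_e$. The only new input is $\Var(N_e)$, which I would compute by the law of total variance:
\[
\Var(N_e)=\sum_k\Big[(1-\varepsilon_k)^2\sigma_{r,k}^2+\E[n_k]\,\varepsilon_k(1-\varepsilon_k)\Big].
\]
Using $\E n_k=ax_k$, the second sum equals $aS(\varepsilon_x-\varepsilon_{2,x})=M\lambda(\varepsilon_x-\varepsilon_{2,x})$. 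Dividing $\Var(N_e)$ by $M$ in the bracket, as the rounding term was treated in \eqref{eq:qlaw}, gives $\Sigma_r/M+\lambda(\varepsilon_x-\varepsilon_{2,x})$. Normalizing by $(M\lambda_e)^2$ then produces \eqref{eq:thin}.

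\emph{Expected obstacle.} The main difficulty is justifying that the shared-miss correlation from self-collisions, and the replacement of $N$ by the random $N_e$ inside the logarithmic inversion, leave only $o(M^{-1})$ contributions to the relative MSE. To handle this I would bound the self-collision count in $L^2$ using $\max_kn_k=o(M)$ together with $K/M=O(1)$. I would then invoke the remainder control of Theorem~\ref{thm:qlaw}, applied conditionally on $N_e$.
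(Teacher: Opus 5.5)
Your proposal is essentially the paper's proof in Appendix~\ref{app:miss}. For the Bernoulli part it uses the same per-device idle probability and the same first-order expansion of the logarithm. For quota encoding it uses the same draw-level thinning $\tilde n_k\mid n_k\sim\mathrm{Bin}(n_k,1-\varepsilon_k)$, the same total-variance computation $\Var(\tilde N)=M\lambda(\varepsilon_x-\varepsilon_{2,x})+\Sigma_r$, and the same rerun of Appendix~\ref{app:qlaw} at load $\lambda_e$, with repeated selections pushed into the remainder.

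The step you flagged is also the loose point of the paper's argument, and your planned $L^2$ bound would not close it. A repeated selection is one burst with one shared miss. The mismatch between draw-level and burst-level thinning is therefore not a zero-mean fluctuation. It is a systematic downward shift of $\widehat N$ by about $\sum_k\varepsilon_k(1-\varepsilon_k)n_k^2/(2M)$.

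Under $\max_kn_k=o(M)$ this shift is $o(1)$ relative to $M\lambda_e$. That is enough for the asymptotic target $S_\varepsilon$. However, its square enters $\rho_e^2$, and it is $o(M^{-1})$ only under a stronger condition such as $\sum_kn_k^2=o(M^{3/2})$, for instance $\max_kn_k=o(\sqrt M)$.

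A concrete failure case: take $K=M^{1/3}$ equal-valued devices at fixed $\lambda$ and a common fixed $\varepsilon\in(0,1)$. Then $n_k\approx\lambda M^{2/3}=o(M)$, yet the squared relative bias is of order $M^{-2/3}$, which dominates the $M^{-1}$ terms. Hence \eqref{eq:thin}, by your route or the paper's, needs that extra hypothesis.
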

\begin{IEEEproof}
See Appendix~\ref{app:miss}.
\end{IEEEproof}

Weighted Jensen gives $\varepsilon_{2,x}\ge\varepsilon_x^{2}$, so $\varepsilon_x-\varepsilon_{2,x}\le\varepsilon_x(1-\varepsilon_x)$, with equality for homogeneous miss probabilities, and \eqref{eq:qlaw} is recovered as the miss probabilities vanish. Misses bias both uncorrected encoders; for BOC they additionally introduce the explicit binomial-thinning variance term in \eqref{eq:thin}.

Independence of the erasures across devices is the one part of this model that the detector does not automatically deliver, and both encoders rely on it. Suppose the coefficients $h_{k,m}$ are Rayleigh, so that $y_m$ is complex Gaussian, and let a logical burst span $N_s$ independently faded dimensions. With $\mathcal A_m$ the transmitting set of \eqref{eq:rxsignal}, the detector energy $Z$ then satisfies $Z\mid\mathcal A_m\sim\mathrm{Gamma}(N_s,1+\sum_{k\in\mathcal A_m}\bar\gamma_k)$, and the true idle probability is $\E_{\mathcal A_m}\big[\Prb\{Z<\tau_\alpha\mid\mathcal A_m\}\big]$, which need \emph{not} equal \eqref{eq:identity}. For $N_s=1$, all co-activating devices are evaluated through a single random energy sample, so their miss events are correlated. Increasing $N_s$ concentrates the energy statistic and reduces the mismatch with the independent-erasure abstraction, at an explicit cost. Call the $N_s$ physical channel uses carrying one busy/idle decision a \emph{logical RE}: $M$ of them consume $MN_s$ channel uses, so a fixed budget of $R$ channel uses gives only $M=R/N_s$ logical REs. Since $\rho$ scales as $M^{-1/2}$ by \eqref{eq:qlaw}, the error acquires a factor $\sqrt{N_s}$. Section~\ref{sec:num} reports the resulting RMSE as a function of $N_s$.

Timing and interference impairments enter only through the busy/idle decisions. A burst delayed within the guard interval still marks its own RE as busy; a larger delay marks a neighbouring RE instead, which biases the count and must be controlled by the guard design.

An RE subject to persistent interference appears always busy and can be excluded before counting, provided the exclusion rule is independent of the device data, since \eqref{eq:qlaw} then applies with $M$ replaced by the number of retained REs.
\vspace{-10pt}
\subsection{Pre-compensation for Bernoulli activation}
A device using Bernoulli activation can compensate an estimated long-term miss probability $\hat\varepsilon_k$ by using
\begin{equation}
p_k^{\rm pc}=\frac{1-e^{-ax_k}}{1-\hat\varepsilon_k},
\label{eq:precomp}
\end{equation}
provided $p_k^{\rm pc}\le1$. If $\hat\varepsilon_k=\varepsilon_k$, then under the independent-erasure model $1-(1-\varepsilon_k)p_k^{\rm pc}=e^{-ax_k}$ and the ideal occupancy identity is restored exactly. The required $\hat\varepsilon_k$ can be estimated from long-term link statistics, so no instantaneous CSI is needed. If $p_k^{\rm pc}$ exceeds one, it must be clipped, leaving residual bias. Moreover, when several devices activate the same RE, their energies add non-coherently, so joint detection can be more likely than the independent-erasure model predicts; pre-compensation can then over-correct. It is therefore most appropriate for moderate miss probabilities and operating points for which $p_k^{\rm pc}$ remains well below one. This correction is analyzed but not evaluated numerically; the detection study of Section~\ref{sec:num} uses quota encoding with the self-normalization of Section~\ref{ssec:selfnorm} instead.
\vspace{-5pt}
\subsection{Self-normalization for quota encoding}
\label{ssec:selfnorm}
Pre-compensation requires $\bar\gamma_k$ and $\tau_\alpha$. The self-normalized quota construction below requires neither, but it estimates the mean, so recovering the sum requires $K$ to be known. Divide the frame into a data group of $M_d$ REs carrying quotas with $\E[n_k]=ax_k$, and a reference group of $M_r$ REs in which \emph{every} device transmits the same constant quota $n_0$. Let $N_{0,d}$ and $N_{0,r}$ be the idle counts and define the occupancy-inverted placement counts
$\widehat N_\bullet=\log(\min\{1,\max(N_{0,\bullet},1)/(M_\bullet(1-\alpha))\})/\log(1-1/M_\bullet)$ for $\bullet\in\{d,r\}$.

\begin{proposition}[Mean estimation without SNR knowledge]
\label{prop:selfnorm}
Let $M_d,M_r\to\infty$ with positive effective group loads held fixed and with $K=O(M_d)$, and let $n_0K\to\infty$. Then
\begin{equation}
\widehat{\bar x}\triangleq\frac{\widehat N_d/a}{\widehat N_r/n_0}\longrightarrow\frac{\sum_k(1-\varepsilon_k)x_k}{\sum_k(1-\varepsilon_k)}=\bar x+\frac{\widehat{\operatorname{Cov}}(1-\varepsilon,x)}{1-\varepsilon_u},
\label{eq:ratio}
\end{equation}
where $\bar x=K^{-1}\sum_kx_k$ and $\widehat{\operatorname{Cov}}(1-\varepsilon,x)=K^{-1}\sum_k(1-\varepsilon_k)x_k-(1-\varepsilon_u)\bar x$ is the empirical covariance over the $K$ devices. The limit holds for arbitrary, unknown and heterogeneous $\{\varepsilon_k\}$ and requires no SNR knowledge. If in addition $(\varepsilon_k,x_k)$ are i.i.d.\ with $\varepsilon_k$ independent of $x_k$, then $\E[\widehat{\operatorname{Cov}}]=0$ and the residual term is $O_p(K^{-1/2})$, so $\widehat{\bar x}$ is consistent for $\E[x]$ as $K\to\infty$.
\end{proposition}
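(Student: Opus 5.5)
The plan is to establish, separately for each group, a law of large numbers for the occupancy-inverted counts $\widehat N_d$ and $\widehat N_r$, and then combine them with the continuous mapping theorem. For the data group, Lemma~\ref{lem:miss} shows that independent thinning turns the quota placements into an effective pool of size $N_d^{\rm eff}=\sum_k\sum_{j\le n_k}\xi_{kj}$ with $\xi_{kj}\sim\mathrm{Bern}(1-\varepsilon_k)$. Conditional on this pool, Lemma~\ref{lem:qidentity} gives the idle probability $(1-\alpha)(1-1/M_d)^{N_d^{\rm eff}}$. The effective placements are still uniform and independent, and the thinning is independent of position.

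First I would show that $\widehat N_d/N_d^{\rm eff}\to1$ in probability. Conditional on $N_d^{\rm eff}$, the finite-frame concentration of Theorem~\ref{thm:finite} applies: the idle count is a function of independent placements and false alarms with bounded differences. Because the effective load is held fixed and positive, $q$ stays bounded away from $0$ and $1$, so the truncation in $\widehat N_d$ is inactive with probability tending to one. Inverting the logarithm, which is smooth away from the boundary, then gives $\widehat N_d=N_d^{\rm eff}(1+O_p(M_d^{-1/2}))$.

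Second, I would show that $N_d^{\rm eff}/(a\sum_k(1-\varepsilon_k)x_k)\to1$. The mean of $N_d^{\rm eff}$ is exactly $\sum_k(1-\varepsilon_k)ax_k$, since the randomized rounding in \eqref{eq:quota} is unbiased and the thinning is independent of it. Its variance is at most $\sum_k\big(\sigma_{r,k}^2+ax_k\big)=O(K+M_d)$. The mean is of order $M_d$ because the load is fixed, so Chebyshev's inequality gives a relative error of $O_p(M_d^{-1/2})$; this is where $K=O(M_d)$ is used.

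The reference group is handled in the same way with constant quotas $n_0$ and no rounding. Here $N_r^{\rm eff}\sim\sum_k\mathrm{Bin}(n_0,1-\varepsilon_k)$ has mean $n_0\sum_k(1-\varepsilon_k)$ and variance at most $n_0K$. Its relative fluctuation is therefore $O_p((n_0K)^{-1/2})$, which vanishes because $n_0K\to\infty$. The occupancy inversion in this group is controlled by $M_r\to\infty$ at fixed load.

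Taking the ratio, the factors $a$ and $n_0$ cancel and leave $\sum_k(1-\varepsilon_k)x_k/\sum_k(1-\varepsilon_k)$. Dividing the numerator and denominator by $K$ gives $(1-\varepsilon_u)\bar x+\widehat{\operatorname{Cov}}$ over $1-\varepsilon_u$, which is the displayed identity and holds purely algebraically. This needs $\sum_k(1-\varepsilon_k)$ to be bounded away from zero relative to $K$, which the positive reference load supplies. No $\bar\gamma_k$ or $\tau_\alpha$ enters anywhere.

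For the final claim, under i.i.d.\ pairs with $\varepsilon_k$ independent of $x_k$, I would write $\widehat{\operatorname{Cov}}=K^{-1}\sum_k(1-\varepsilon_k)(x_k-\bar x)$. Its expectation is $O(1/K)$ by exchangeability; it is exactly zero in its centered population version. Its variance is $O(1/K)$ because the terms are bounded. Hence $\widehat{\operatorname{Cov}}=O_p(K^{-1/2})$, and combining this with $\bar x\to\E[x]$ gives consistency.

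I expect the main obstacle to be the first step, the occupancy inversion with a random pool size. Theorem~\ref{thm:finite} and Theorem~\ref{thm:qlaw} were stated with $\alpha=0$ or ideal detection, so I would need to recheck the bounded-difference argument with thinning and false alarms included. I would do this by treating each thinned placement and each false-alarm indicator as an independent coordinate, which keeps each one-coordinate change to at most one idle RE.
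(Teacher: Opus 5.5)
Your proposal is correct and follows essentially the paper's route: consistency of the occupancy inversion in each group conditional on the thinned count (you use the bounded-difference argument of Theorem~\ref{thm:finite} extended to false alarms, the paper uses Theorem~\ref{thm:qlaw}), then Chebyshev on the thinned counts using $\Var(\widetilde N_d)=O(M_d+K)$ and $\Var(\widetilde N_r)\le n_0K/4$, and finally the ratio and the algebraic identity. One small correction: under the stated independence, $\E[\widehat{\operatorname{Cov}}]$ is exactly zero rather than merely $O(1/K)$, because $\E[(1-\varepsilon_j)x_k]=\E[1-\varepsilon]\,\E[x]$ for every pair $j,k$ (independence across devices for $j\neq k$, independence of $\varepsilon_k$ and $x_k$ for $j=k$), so the usual $-\mathrm{Cov}(1-\varepsilon,x)/K$ bias of the plug-in covariance vanishes, as the statement asserts.
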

\begin{IEEEproof}
Thinning a uniformly placed quota by independent Bernoulli erasures yields uniformly placed surviving bursts, so Lemma~\ref{lem:qidentity} applies in both groups conditional on the thinned count $\widetilde N$. At fixed positive effective load, Theorem~\ref{thm:qlaw} gives $\widehat N_\bullet/\widetilde N_\bullet\to1$ in probability. For the data group, Appendix~\ref{app:miss} gives $\E[\widetilde N_d]=a\sum_k(1-\varepsilon_k)x_k=\Theta(M_d)$ and $\Var(\widetilde N_d)=O(M_d+K)=O(M_d)$, hence $\widetilde N_d/\E[\widetilde N_d]\to1$. For the reference group, $\E[\widetilde N_r]=n_0\sum_k(1-\varepsilon_k)$ and $\Var(\widetilde N_r)\le n_0K/4$; with a nonvanishing average detection probability and $n_0K\to\infty$, its relative fluctuation also vanishes. Therefore the numerator and denominator of \eqref{eq:ratio}, after division by $K$, converge to $K^{-1}\sum_k(1-\varepsilon_k)x_k$ and $K^{-1}\sum_k(1-\varepsilon_k)$, respectively. Adding and subtracting $(1-\varepsilon_u)\bar x$ gives the second equality. The last claim follows from the central limit theorem applied to $\widehat{\operatorname{Cov}}$.
\end{IEEEproof}

Independence between $\varepsilon_k$ and $x_k$ is reasonable when detectability is governed mainly by geometry while the value is governed by the sensed quantity.

\begin{proposition}[Resource overhead of self-normalization]
\label{prop:split}
Let both groups operate at their own optimal loads with $K/M\to0$, so that each attains $\rho=1/\sqrt{2M_\bullet}$, and let the groups be independent. Since a ratio of independent estimates has the sum of their relative variances, the delta method gives $\rho_{\rm ratio}^{2}\simeq\rho_d^{2}+\rho_r^{2}=\tfrac12(M_d^{-1}+M_r^{-1})$, which is minimized over $M_d+M_r=M$ at $M_d=M_r=M/2$, where $\rho_{\rm ratio}=2/\sqrt{2M}$. Self-normalization therefore increases the ideal-detection RMSE by a factor of $2$ relative to using all $M$ REs for a single BOC estimate. The unequal choice $M_r=M/4$ gives a factor $\sqrt{16/3}=2.31$.
\end{proposition}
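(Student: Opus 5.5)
The argument reduces the ratio estimator to a sum of two relative variances and then minimizes that sum in closed form.

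\emph{Linearization of the ratio.} Write $\widehat{\bar x}=(\widehat N_d/a)/(\widehat N_r/n_0)$, with numerator estimate $U$ and denominator estimate $V$. Their limits are $u=\sum_k(1-\varepsilon_k)x_k$ and $v=\sum_k(1-\varepsilon_k)$, as in Proposition~\ref{prop:selfnorm}. Put $\Delta_U=U/u-1$ and $\Delta_V=V/v-1$. A first-order expansion of $\log U-\log V$ gives $\widehat{\bar x}/(u/v)-1=\Delta_U-\Delta_V+O_p(\Delta_U^2+\Delta_V^2)$. The two groups use disjoint REs. We take their placement and rounding randomness to be independent, as the statement assumes, so $\Delta_U$ and $\Delta_V$ are independent. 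Squaring and taking expectations then gives $\rho_{\rm ratio}^2=\rho_d^2+\rho_r^2+o(M^{-1})$. The cross term vanishes by independence and the asymptotic unbiasedness implied by Theorem~\ref{thm:qlaw}. The quadratic remainder is $o(M^{-1})$ because each $\Delta$ is $O_p(M_\bullet^{-1/2})$.

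\emph{Per-group errors.} Each group is a BOC frame of length $M_\bullet$. Under ideal detection, Corollary~\ref{cor:load} with $K/M_\bullet\to0$ gives $\rho_\bullet^2=1/(2M_\bullet)$ to leading order. The reference group has constant quota $n_0$ and hence zero rounding variance. Its collision term $(e^\lambda-1-\lambda)/\lambda^2$ is therefore minimized as $\lambda\to0$, with limit $1/2$, so it attains the same $1/(2M_r)$ in the stated regime. This gives $\rho_{\rm ratio}^2\simeq\tfrac12(M_d^{-1}+M_r^{-1})$.

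\emph{Optimizing the split.} Minimize $f(M_d)=M_d^{-1}+(M-M_d)^{-1}$ over $0<M_d<M$. It is strictly convex and symmetric about $M/2$, so the unique minimizer is $M_d=M_r=M/2$. There $\rho_{\rm ratio}^2=\tfrac12\cdot\tfrac4M=2/M$, so $\rho_{\rm ratio}=2/\sqrt{2M}$. This is a factor $2$ above the single-BOC value $1/\sqrt{2M}$. For $M_r=M/4$ we get $\rho^2=\tfrac12(\tfrac4{3M}+\tfrac4M)=\tfrac{8}{3M}$. Dividing by $1/(2M)$ gives $16/3$, so the RMSE ratio is $\sqrt{16/3}\approx2.31$.

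\emph{Main obstacle.} The delicate step is justifying the delta method at the level of mean-square error, not merely in distribution. The estimator $1/V$ is unbounded in principle, and the truncation $\max(N_{0,\bullet},1)$ must be controlled. I would handle this by working on the event where both idle counts lie within $t_\delta$ of their means, using the exponential tail of Theorem~\ref{thm:finite}. Off that event the clipped estimator is bounded by $O(M_\bullet\log M_\bullet)$ and the event has exponentially small probability, so its contribution is $o(M^{-1})$. On the event, a Taylor expansion with bounded remainder applies.
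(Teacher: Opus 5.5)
Your argument is correct and is essentially the paper's own: linearize the ratio by the delta method, use independence of the two groups to add their relative variances, take $1/(2M_\bullet)$ per group from Corollary~\ref{cor:load} (the reference group has zero rounding variance and $\lambda_r\to0$), and minimize $\tfrac12(M_d^{-1}+M_r^{-1})$ over the split. The paper stops at $\simeq$, so your mean-square justification is an addition. If you carry it out, note two points: off the good event you also need the lower bound $\widehat N_r\ge1$, which holds under ideal detection with $\alpha=0$ because $n_0K\ge1$ placements leave at least one busy RE; and the deviation threshold must grow with $M$, rather than being the fixed-$\delta$ value $t_\delta$, for the bad event to have negligible probability.
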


Under ideal detection, the reference group should use the smallest nonzero quota, $n_0=1$. To see this, define $g(\lambda)\triangleq(e^{\lambda}-1-\lambda)/\lambda^{2}$, so that the reference-group error is $\sqrt{g(\lambda_r)/M_r}$ at load $\lambda_r=n_0K/M_r$. Then $g'(\lambda)=h(\lambda)/\lambda^{3}$ with $h(\lambda)=\lambda(e^{\lambda}-1)-2(e^{\lambda}-1-\lambda)$, and $h(0)=h'(0)=0$ with $h''(\lambda)=\lambda e^{\lambda}>0$, so $g$ is strictly increasing and the error grows with $n_0$.

This conclusion changes under nonzero misses, which is the regime in which self-normalization is used. For the reference group, substituting $N_r=n_0K$ into \eqref{eq:thin} and letting $\lambda_r\to0$ gives
\begin{equation}
\rho_r^{2}=\frac{1}{2M_r}+\frac{\varepsilon_u-\varepsilon_{2,u}}{n_0K(1-\varepsilon_u)^{2}}+O(\lambda_r).
\label{eq:refquota}
\end{equation}
In \eqref{eq:refquota} the collision term increases with $n_0$ through $\lambda_r=n_0K/M_r$ while the thinning term decreases as $1/n_0$, so an interior optimum exists. At $M=2048$, $K=200$, $M_r=M/4$ and $\varepsilon_k\sim\mathrm{Unif}[0,2\bar\varepsilon]$, evaluating \eqref{eq:thin} gives $n_0=1$ for $\bar\varepsilon\le0.05$, $n_0=2$ at $\bar\varepsilon=0.15$ and $n_0=4$ at $\bar\varepsilon=0.30$. The symmetric split of Proposition~\ref{prop:split} remains the ideal-detection benchmark; with nonzero misses, the optimal split can shift because the thinning terms depend on the two group sizes.
\vspace{-5pt}
\section{Extensions}
\label{sec:ext}

\subsection{Function class and signed data}
Replacing $x_k$ by $g_k(x_k)\ge0$, nonnegativity being required by \eqref{eq:encode} and \eqref{eq:quota}, and applying $\Psi$ at the server computes $\Psi(\sum_kg_k(x_k))$ with the accuracy of Theorem~\ref{thm:bern} or Theorem~\ref{thm:qlaw} applied to $S_g=\sum_kg_k(x_k)$. This is the nomographic function class computable by analog AirComp~\cite{Buck79,GoldenbaumStanczak13}. Since $B_m$ is the logical OR of the activations on RE $m$, an additional burst can only turn an idle RE busy, so no device can cancel another's contribution. Handling signed data therefore requires two RE groups $A$ and $B$ carrying $x_k^{+}=\max(x_k,0)$ and $x_k^{-}=\max(-x_k,0)$, with $\Shat=\Shat_A-\Shat_B$. This doubles the RE cost, and the variances of the two estimates add. Vector aggregation remains linear in the dimension; superimposing several coordinates on one frame through random nonnegative weights preserves the exact identity and yields a concave likelihood in the aggregate vector, but is not developed here.
\vspace{-5pt}
\subsection{Privacy}
\label{sec:privacy}
Occupancy encoding leaks less per observation than amplitude or energy schemes, where one isolated RE already carries $x_k$. Between the two encoders, Bernoulli activation randomizes each device's transmission pattern, whereas quota encoding fixes the placement count up to rounding. An adversary able to isolate the transmissions of one device therefore recovers $ax_k$ to within one burst under quota encoding, but observes only a $\mathrm{Bin}(M,p_k)$ sample under Bernoulli activation. Determinism of the placement count is thus the one respect in which quota encoding is the weaker of the two. In the intended over-the-air setting the receiver observes only the superposition of anonymous transmissions, which limits per-device inference but does not constitute a formal guarantee. Differential privacy can be obtained by adding calibrated randomization to the activation or quota mechanism~\cite{DworkRoth14}, at the cost of additional aggregation error. The resulting privacy--utility trade-off is not analyzed here.
\vspace{-5pt}
\subsection{Federated learning}
Federated SGD repeats the same aggregation every round, so the per-round savings of Section~\ref{sec:model} compound. Gradients are signed and high-dimensional, so an extension would clip them coordinate-wise, aggregate by the signed encoding above, and apply a shared count-sketch projection~\cite{Charikar02} first, since linear sketches commute with summation. The aggregation error would enter the SGD bound of~\cite{GhadimiLan13} as a variance term falling with $M$ by \eqref{eq:brmse} or \eqref{eq:finiteK}, and residual detection bias as a neighborhood term~\cite{AjalloeianStich20}, so the frame length could be traded against accuracy per round. This is left for future work.
\vspace{-5pt}
\section{Numerical Results}
\label{sec:num}

Unless stated otherwise, the device values are drawn independently as $x_k\sim\mathrm{Unif}[0,1]$, fresh activation patterns and quota placements are generated in every frame, and performance is reported as the relative RMSE $\rho$ of Theorem~\ref{thm:bern}. Ideal-detection experiments use $\alpha=0$ and need no channel realization, since \eqref{eq:identity} contains none. Detector experiments use $\alpha=10^{-3}$ and Rayleigh fading that is independent across REs and devices, with the average receive powers $\bar\gamma_k$ drawn log-uniformly over a $10$~dB spread about the stated mean receive SNR. In Figs.~\ref{fig:qlaw} and~\ref{fig:scaling} markers show simulation and lines show theory; elsewhere curves are simulated and reference lines are labelled. 

For the external comparison, we restrict the numerical baselines to non-coherent continuous-sum methods that operate without instantaneous CSI: the unit-affine energy estimator in the NC-OAC framework of~\cite{DahlNC26} and REED~\cite{ChenREED26}. Both use statistical channel powers at the transmitter, which ODC and BOC do not. The affine NC-OAC reference uses unit statistical channel-power normalization over a block-Rayleigh channel, and REED retains its native paired measurements, with both branches counted in the total physical-RE budget. Coherent and digital AirComp schemes are not included numerically because they require instantaneous CSI somewhere in the link, whereas ODC and BOC target operation without channel knowledge of any kind. Sign-aggregation and learning-specific schemes discussed in Section~\ref{sec:intro} are likewise excluded because they target different outputs or optimize an end-to-end learning objective rather than the same continuous scalar sum.
\vspace{-5pt}
\subsection{Theory, scale uncertainty, and population scaling}

\begin{figure}[t]
\centering
\begin{tikzpicture}
\begin{loglogaxis}[twcaxis, xlabel={aggregate $S$}, ylabel={relative RMSE},
 xmin=1.5,xmax=7000, ymin=0.02, ymax=1.05,
 legend style={font=\scriptsize,fill opacity=0.92,text opacity=1,at={(0.02,0.98)},anchor=north west}]
\addplot[mark=triangle*,brown,very thick,mark size=1.45] coordinates {(1.7783,0.05987)(4.8697,0.06052)(13.3352,0.05887)(36.5174,0.06020)(100,0.05956)(273.8420,0.06008)(749.8942,0.06065)(2053.5250,0.06138)(5623.4133,0.06405)};
\addlegendentry{multi-gain design}
\addplot[mark=square*,black,very thick,mark size=1.35] coordinates {(1.7783,0.02761)(4.8697,0.02741)(13.3352,0.02775)(36.5174,0.02801)(100,0.02726)(273.8420,0.02745)(749.8942,0.02788)(2053.5250,0.02752)(5623.4133,0.02766)};
\addlegendentry{known scale}
\addplot[mark=diamond*,red,densely dashed,very thick,mark size=1.35] coordinates {(1.7783,0.13368)(4.8697,0.08044)(13.3352,0.04999)(36.5174,0.03388)(100,0.02726)(273.8420,0.04560)(749.8942,0.36198)(2053.5250,0.76701)(5623.4133,0.91492)};
\addlegendentry{fixed gain, $S_{\rm ref}=100$}
\end{loglogaxis}
\end{tikzpicture}
\vspace{-3pt}
\caption{Scale uncertainty for Bernoulli activation, $M=2048$, $D=10^4$ ($4000$ frames per point).}
\label{fig:scale}
\end{figure}

Fig.~\ref{fig:scale} uses Bernoulli activation throughout and compares the guarded multi-gain design with a known-scale reference and with a single fixed gain calibrated at $S_{\rm ref}=100$. The aggregate is swept over four decades rather than generated from the default value distribution, and the scale limitation of Theorem~\ref{thm:conserve} is visible. Across that range the multi-gain design holds the RMSE between $5.9\%$ and $6.4\%$, against $2.75\%$ for the known-scale benchmark. The near-constant level is the signature of \eqref{eq:ladderlaw}: spreading the gains geometrically flattens the per-RE information $I(s)$ across the range, and because the scale integral of $I$ is fixed by \eqref{eq:conserve}, flattening it necessarily lowers its peak. Uniformity over the range is thus paid for by a fixed multiplicative loss in accuracy. The fixed-gain estimator instead spends all of its information near $S_{\rm ref}$ and collapses on either side, which is why a single open-loop gain is usable only when the aggregate is already known to within the tolerance of \eqref{eq:mismatch}.

\begin{figure}[t]
\centering
\begin{tikzpicture}
\begin{axis}[twcaxis, xlabel={frame load $\lambda$}, ylabel={rel. RMSE $\times\sqrt M$},
 ymin=0.6, ymax=2.6,
 legend style={font=\scriptsize,fill opacity=0.92,text opacity=1,at={(0.98,0.98)},anchor=north east}]
\addplot[blue,mark=*,mark size=1.45,only marks] coordinates
{(0.2,2.399)(0.3,1.957)(0.4,1.763)(0.5,1.614)(0.58,1.520)(0.7,1.443)(0.9,1.349)(1.2,1.296)(1.594,1.258)(2,1.276)(2.5,1.338)(3,1.474)};
\addlegendentry{Bernoulli simulation}
\addplot[blue,very thick,domain=0.18:3,samples=90]{sqrt(exp(x)-1)/x};
\addlegendentry{Bernoulli theory}
\addplot[red,mark=square*,mark size=1.35,only marks] coordinates
{(0.2,0.961)(0.3,0.858)(0.4,0.804)(0.5,0.810)(0.58,0.814)(0.7,0.809)(0.9,0.839)(1.2,0.889)(1.594,0.976)(2,1.065)(2.5,1.196)(3,1.357)};
\addlegendentry{BOC simulation}
\addplot[red,very thick,densely dashed,domain=0.18:3,samples=90]{sqrt(exp(x)-1-x+0.016276)/x};
\addlegendentry{BOC theory}
\end{axis}
\end{tikzpicture}
\vspace{-3pt}
\caption{Relative RMSE versus frame load, $K=100$, $M=1024$ ($6000$ trials).}
\label{fig:qlaw}
\end{figure}

Fig.~\ref{fig:qlaw} compares the two error laws with simulation across the full load range. The simulated points follow both laws, including near their minima. Bernoulli activation reaches $\rho\sqrt M\simeq1.24$ near $\lambda=1.59$; BOC reaches about $0.81$ near $\lambda=0.53$. The two minima differ in position as well as depth, and both follow from the composition of the error. Bernoulli activation pays $e^{\lambda}-1$, whose activation-count term grows only linearly, so the optimum sits where the $1/\lambda^{2}$ normalization stops repaying it. Quota encoding removes that term, leaving the collision and rounding terms of \eqref{eq:qlaw}, and the residual $e^{\lambda}-1-\lambda\to\lambda^{2}/2$ is quadratic rather than linear near the origin. The optimum therefore moves to a sparser frame, here to a third of the Bernoulli load, and by Corollary~\ref{cor:load} the ratio $\lambda_0/\lambda^\star$ falls further as $K/M$ decreases. Running BOC at the Bernoulli optimum would forfeit much of the gain: the curve there reads $0.98$ against $0.81$ at its own optimum.

\begin{figure}[t]
\centering
\begin{tikzpicture}
\begin{semilogxaxis}[twcaxis, xlabel={population load $K/M$}, ylabel={rel. RMSE $\times\sqrt M$},
 xmin=0.015,xmax=2.3,ymin=0.68,ymax=1.38,
 legend style={font=\scriptsize,fill opacity=0.92,text opacity=1,at={(0.5,1.02)},anchor=south,legend columns=2,
 /tikz/every even column/.append style={column sep=0.25cm}}]
\addplot[blue,mark=*,only marks,mark size=1.45] coordinates {(0.0195312,1.28232)(0.0488281,1.23018)(0.0976562,1.24918)(0.195312,1.21958)(0.488281,1.27996)(0.976562,1.25926)(1.95312,1.26881)};
\addlegendentry{Bernoulli sim.}
\addplot[blue,very thick] coordinates {(0.0195,1.24263)(2.0,1.24263)};
\addlegendentry{Bernoulli theory}
\addplot[red,mark=square*,only marks,mark size=1.35] coordinates {(0.0195312,0.77251)(0.0488281,0.78307)(0.0976562,0.80790)(0.195312,0.84749)(0.488281,0.86107)(0.976562,0.92573)(1.95312,1.01507)};
\addlegendentry{BOC sim.}
\addplot[red,very thick,densely dashed] coordinates {(0.0195312,0.76790)(0.0488281,0.79000)(0.0976562,0.81200)(0.195312,0.83996)(0.488281,0.88901)(0.976562,0.93817)(1.95312,1.00105)};
\addlegendentry{BOC theory}
\end{semilogxaxis}
\end{tikzpicture}
\vspace{-3pt}
\caption{Finite-population scaling, $M=1024$ ($3000$ trials per point).}
\label{fig:scaling}
\end{figure}

Fig.~\ref{fig:scaling} varies the population at fixed frame length, with the BOC load minimizing \eqref{eq:qlaw} with $\bar\sigma_r^2=1/6$ at each $K$. Simulation follows the law from $K/M=0.02$ to nearly $2$, so the improvement persists well outside the $K/M\to0$ regime. At $K/M\simeq0.1$ the normalized RMSE improves from about $1.25$ for Bernoulli activation to $0.81$ for BOC, and at $K/M\simeq2$ from $1.27$ to $1.02$. The margin narrows because the two terms of \eqref{eq:qlaw} scale differently in $K$: the collision term is set by the load alone, whereas the rounding term $K\bar\sigma_r^{2}/M$ grows with the population, so the advantage that quota encoding buys is progressively spent on rounding. Remark~\ref{rem:regime} bounds where this ends, and the practical consequence is that BOC is worth its extra encoder complexity mainly for $K\lesssim M$; beyond that the two encoders converge and Bernoulli activation is the simpler choice.

At $M=1024$ and $K=100$, Bernoulli activation performs $1024$ Bernoulli draws per device and transmits $16.6\pm3.8$ bursts at the simulated operating point. BOC instead forms one quota and uses about $6.1\pm0.5$ placement draws per device. The lower BOC estimation error is therefore obtained with less local random-number generation and less transmit activity, with the largest difference in the $K\lesssim M$ regime just described.

\begin{figure}[t]
\centering
\begin{tikzpicture}
\begin{loglogaxis}[twcaxis, xlabel={frame length $M$}, ylabel={relative RMSE},
 legend style={font=\scriptsize,fill opacity=0.92,text opacity=1,at={(0.97,0.97)},anchor=north east}]
\addplot[brown,mark=triangle*,mark size=1.45,very thick] coordinates {(512,0.107125)(1024,0.074836)(2048,0.052572)(4096,0.036687)(8192,0.026101)};
\addlegendentry{multi-gain design}
\addplot[blue,mark=*,mark size=1.4,very thick,dashdotted] coordinates {(512,0.054531)(1024,0.038903)(2048,0.027254)(4096,0.019340)(8192,0.014127)};
\addlegendentry{Bernoulli, known scale}
\addplot[black,mark=diamond*,mark size=1.35,very thick,densely dashed] coordinates {(512,0.039055)(1024,0.025923)(2048,0.017996)(4096,0.012294)(8192,0.008667)};
\addlegendentry{BOC, known scale}
\addplot[red,mark=square*,mark size=1.35,very thick] coordinates {(512,0.042357)(1024,0.026893)(2048,0.018570)(4096,0.012956)(8192,0.008891)};
\addlegendentry{two-phase BOC}
\end{loglogaxis}
\end{tikzpicture}
\vspace{-3pt}
\caption{Operation without scale knowledge, $D=10^3$, $K=200$ ($1800$ trials per point).}
\label{fig:blind}
\end{figure}

Fig.~\ref{fig:blind} compares two-phase BOC against three references: BOC and Bernoulli activation with the scale known, and the non-adaptive guarded geometric design of Section~\ref{sec:scale} with $Q=13$ and $\beta=2$, whose measured coefficient is $2.38$ against the $2.341$ predicted by \eqref{eq:ladderlaw}. The $6\%$ probe overhead of two-phase BOC is included in $M$. Acquiring the gain rather than assuming it costs little. At $M=2048$, two-phase BOC yields $1.86\%$ against $1.80\%$ for known-scale BOC, $2.73\%$ for known-scale Bernoulli activation and $5.26\%$ for the non-adaptive multi-gain design. The probe closes almost the whole gap for the reason given in Section~\ref{sec:blind}: the coarse estimate it returns is well within the tolerance of \eqref{eq:mismatch}, so $6\%$ of the REs buys it. The non-adaptive design has no such escape, since Theorem~\ref{thm:conserve} applies to any fixed gain distribution and its penalty is paid on every frame. Two-phase operation is therefore preferable whenever a single downlink broadcast is available.
\vspace{-10pt}
\subsection{Imperfect activity detection}

\begin{figure}[t]
\centering
\begin{tikzpicture}
\begin{axis}[twcaxis, xlabel={average miss probability $\bar\varepsilon$}, ylabel={relative bias of the mean},
 ymin=-0.36,ymax=0.035,
 legend style={font=\scriptsize,fill opacity=0.92,text opacity=1,at={(0.5,1.02)},anchor=south,legend columns=2},
 ytick={-0.3,-0.2,-0.1,0}]
\addplot[blue,mark=*,mark size=1.45,very thick] coordinates
{(0.02,-0.0192)(0.05,-0.0474)(0.1,-0.0995)(0.15,-0.1449)(0.2,-0.2157)(0.25,-0.2560)(0.3,-0.3019)(0.35,-0.3322)};
\addlegendentry{uncorrected}
\addplot[red,mark=square*,mark size=1.35,very thick] coordinates
{(0.02,0.0012)(0.05,0.0009)(0.1,0.0045)(0.15,0.0026)(0.2,-0.0061)(0.25,0.0108)(0.3,0.0005)(0.35,0.0003)};
\addlegendentry{self-normalized}
\addplot[black,densely dotted,thick] coordinates {(0.02,0)(0.35,0)};
\end{axis}
\end{tikzpicture}
\vspace{-3pt}
\caption{Relative bias under heterogeneous misses, $\varepsilon_k\sim\mathrm{Unif}[0,2\bar\varepsilon]$, $K=200$, $M=2048$, $M_r=512$, $n_0=2$.}
\label{fig:miss}
\end{figure}

Fig.~\ref{fig:miss} shows the effect of heterogeneous misses on quota encoding, with and without the self-normalization of Section~\ref{ssec:selfnorm}. Uncorrected, the aggregate is attenuated almost in proportion to the average miss rate: the mean is $14.5\%$ low at $\bar\varepsilon=0.15$ and $30\%$ low at $\bar\varepsilon=0.30$. The attenuation is not a detector artifact; rather, it reflects a change in the effective target: by Lemma~\ref{lem:miss} the uncorrected estimator converges to $\sum_k(1-\varepsilon_k)x_k$ rather than to $S$. Self-normalization holds the residual bias within about $\pm1\%$ without using the individual SNRs, as Proposition~\ref{prop:selfnorm} predicts: the reference group estimates the same effective count $\sum_k(1-\varepsilon_k)$ that appears in the numerator, so it cancels in \eqref{eq:ratio}. The correction costs the factor $2.31$ that Proposition~\ref{prop:split} gives for the $M_r=M/4$ split used here, which is the price of splitting the frame rather than of the misses themselves.
\vspace{-10pt}
\subsection{Comparison with external aggregation baselines}

\begin{figure}[t]
\centering
\begin{tikzpicture}
\begin{loglogaxis}[twcaxis, xlabel={physical channel uses $R$}, ylabel={relative RMSE},
 xmin=50,xmax=5200,ymin=0.009,ymax=0.4,
 legend style={font=\scriptsize,fill opacity=0.93,text opacity=1,at={(0.5,1.02)},anchor=south,legend columns=3,
 /tikz/every even column/.append style={column sep=0.15cm}}]
\addplot[mark=o,blue,densely dashed,thick,mark size=1.2] coordinates {(64,0.1677)(256,0.0748)(1024,0.0390)(4096,0.0203)};
\addlegendentry{Bernoulli ODC}
\addplot[mark=square*,red,thick,mark size=1.0] coordinates {(64,0.125696)(256,0.055664)(1024,0.025458)(4096,0.012091)};
\addlegendentry{BOC}
\addplot[mark=triangle*,brown,thick,mark size=1.3] coordinates {(64,0.1696)(256,0.1334)(1024,0.1221)(4096,0.1248)};
\addlegendentry{affine NC-OAC}
\addplot[mark=diamond*,black,thick,mark size=1.2] coordinates {(64,0.175329)(256,0.089309)(1024,0.044179)(4096,0.022086)};
\addlegendentry{REED}
\addplot[mark=o,blue,dotted,very thick,mark size=1.2] coordinates {(64,0.17160)(256,0.11389)(1024,0.05525)(4096,0.02698)};
\addlegendentry{ODC, 20\,dB}
\addplot[mark=square,red,dotted,very thick,mark size=1.2] coordinates {(64,0.14303)(256,0.08242)(1024,0.03718)(4096,0.01780)};
\addlegendentry{BOC, 20\,dB}
\addplot[brown,densely dotted,very thick] coordinates {(64,0.1165)(4096,0.1165)};
\addlegendentry{fading floor}
\end{loglogaxis}
\end{tikzpicture}
\vspace{-3pt}
\caption{External comparison at matched physical channel uses, $K=100$. Curves without a 20-dB label use ideal activity decisions; the 20-dB ODC/BOC curves use the Gamma detector.}
\label{fig:baseline}
\end{figure}

All four schemes in Fig.~\ref{fig:baseline} estimate the same arithmetic sum, and are compared at equal physical channel uses though not at equal radiated energy. With ideal activity decisions, BOC has the lowest error over the common resource range. At $1024$ physical channel uses the relative RMSEs are $12.2\%$ for affine NC-OAC, $4.42\%$ for REED, $3.90\%$ for Bernoulli ODC and $2.55\%$ for BOC; at $4096$, REED reaches $2.21\%$ and BOC $1.21\%$, while affine NC-OAC barely moves. These curves isolate the aggregation rule: with detection idealized, the only difference between the schemes is what the receiver measures, and the occupancy observable carries no channel weight while the energy observables do. The comparison is also not symmetric in what the transmitters know: both energy baselines are given exact average channel powers here.

The $20$~dB curves add the detector of Section~\ref{sec:det}, with $N_s\in\{1,2,4\}$ selected at each budget. Detection costs the two occupancy encoders differently. BOC stays below every baseline, attaining $14.3\%$, $8.24\%$, $3.72\%$ and $1.78\%$ at $64$, $256$, $1024$ and $4096$ physical channel uses against $17.53\%$, $8.93\%$, $4.42\%$ and $2.21\%$ for REED. The margin is a factor of $1.08$ to $1.24$ across this range, against $1.4$ to $1.8$ for the same curves under ideal decisions. The choice of $N_s$ matters, by an amount Fig.~\ref{fig:snr} quantifies. Bernoulli ODC alone also remains above REED once detection is included, reaching $11.4\%$, $5.53\%$ and $2.70\%$ at $256$, $1024$ and $4096$ physical channel uses; at this operating point only quota encoding yields an end-to-end margin over REED.

\begin{figure}[t]
\centering
\begin{tikzpicture}
\begin{semilogyaxis}[twcaxis, xlabel={mean receive SNR (dB)}, ylabel={relative RMSE},
 xmin=15, xmax=37, ymin=0.016, ymax=0.22, legend columns=2,
 legend style={at={(0.97,0.97)},anchor=north east}]
\addplot[mark=square*,red,thick,mark size=1.0] coordinates {(16,0.19282)(20,0.08803)(24,0.04030)(28,0.02279)(32,0.01843)(36,0.01775)};
\addlegendentry{BOC, $N_s=1$}
\addplot[mark=square,red,densely dashed,thick,mark size=1.3] coordinates {(16,0.04878)(20,0.02631)(24,0.02562)(28,0.02559)(32,0.02484)(36,0.02524)};
\addlegendentry{BOC, $N_s=2$}
\addplot[mark=square,red,dotted,very thick,mark size=1.0] coordinates {(16,0.03741)(20,0.03626)(24,0.03807)(28,0.03806)(32,0.03695)(36,0.03792)};
\addlegendentry{BOC, $N_s=4$}
\addplot[mark=o,blue,densely dashed,thick,mark size=1.2] coordinates {(16,0.05661)(20,0.04010)(24,0.03872)(28,0.03316)(32,0.02870)(36,0.02721)};
\addlegendentry{Bernoulli ODC}
\addplot[mark=diamond*,black,thick,mark size=1.2] coordinates {(16,0.03140)(20,0.03104)(24,0.03131)(28,0.03122)(32,0.03129)(36,0.03144)};
\addlegendentry{REED}
\addplot[mark=triangle*,brown,thick,mark size=1.3] coordinates {(16,0.11714)(20,0.11845)(24,0.11755)(28,0.11858)(32,0.11772)(36,0.11697)};
\addlegendentry{affine NC-OAC}
\addplot[gray,densely dotted,very thick,forget plot] coordinates {(15,0.0175)(37,0.0175)};
\end{semilogyaxis}
\end{tikzpicture}
\caption{Relative RMSE versus mean receive SNR, $R=2048$ physical channel uses, $K=100$. Horizontal dotted line: ideal-detection floor $1.75\%$.}
\label{fig:snr}
\end{figure}

In Fig.~\ref{fig:snr}, BOC uses $M=R/N_s$ logical REs as in Section~\ref{sec:det}, Bernoulli ODC is shown at its best $N_s$, and REED uses the paired construction of~\cite{ChenREED26}, repeated over $R/2$ independent pairs. The two families are limited by different quantities, which is why their SNR dependence differs in kind. Neither energy baseline has receiver noise as its dominant error term: affine NC-OAC carries the frozen channel gains of Remark~\ref{rem:shot}, and REED replaces that floor by the fading self-noise of~\cite[Prop.~2]{ChenREED26}, which is reduced by averaging over $R/2$ pairs to the high-SNR limit $\sqrt{2/R}=3.1\%$. Neither term contains $\sigma_z^{2}$, so raising the transmit power moves neither curve. In the occupancy domain fading enters only through the miss probability $\varepsilon_k$ of \eqref{eq:thin}, which vanishes as the SNR grows. The error law then reduces to \eqref{eq:qlaw}, which depends on the load alone. The occupancy curves therefore descend until they meet the resolution set by the number of logical REs.

That difference dictates how the diversity order should be chosen. Each BOC curve saturates at its own value of \eqref{eq:qlaw} evaluated at $M=R/N_s$, so diversity buys detection reliability with logical REs at a fixed budget, and the useful setting is the smallest $N_s$ whose miss floor lies below that resolution floor. The crossover moves with SNR: $N_s=4$ is preferable at $16$~dB, $N_s=2$ at $20$ and $24$~dB, and $N_s=1$ from $28$~dB, where the $8.80\%$ of $N_s=1$ at $20$~dB has fallen to $2.28\%$. Beyond $32$~dB the $N_s=1$ curve flattens at the dotted line, the ideal-detection value $1.75\%$ of \eqref{eq:qlaw} at $M=R$: misses have become negligible and the collision and rounding terms alone remain, so further transmit power buys nothing and the limit is resolution rather than detection. The practical rule is to set $N_s$ from the link budget rather than fixing it once, since choosing it too small costs an order of magnitude near the detection limit while choosing it too large costs a factor $\sqrt{N_s}$ at high SNR.

Across the schemes, BOC improves from $3.74\%$ at $16$~dB to $1.78\%$ at $36$~dB and Bernoulli ODC from $5.66\%$ to $2.72\%$, while both baselines stay at their floors. The margins therefore grow with SNR: BOC is $4.5\times$ more accurate than affine NC-OAC at $20$~dB and $6.6\times$ at $36$~dB, and $1.2$ to $1.8\times$ more accurate than REED from $20$ to $36$~dB. The plotted range begins at $16$~dB because the occupancy schemes require each burst to be detected individually, whereas the energy baselines measure an aggregate and therefore benefit from the combined power of all $K$ devices; below roughly $16$~dB missed activations dominate and the energy schemes are preferable, which is the operating regime of~\cite{ChenREED26}. The two margins differ in kind. Affine NC-OAC has a fading floor no resource budget closes, so its gap to BOC keeps growing. The gap to REED saturates instead, because both schemes scale as $R^{-1/2}$: comparing that $3.1\%$ limit with the $1.75\%$ reached here bounds the ratio at $1.8$, consistent with the factor $2$ that Corollary~\ref{cor:load} gives under ideal detection with $K/M\to0$. What does not saturate is the requirement: REED scales each device's transmit amplitude by the reciprocal of its average channel amplitude $\mu_k$, where $\mu_k^{2}=\E|h_k|^{2}$, and by the square root of its value, so it needs both a calibrated $\mu_k$ and linear amplitude control, and spends two REs per scalar with one carrying only noise for nonnegative data. BOC uses no channel quantity, transmits a fixed-amplitude burst or nothing, and needs a single RE group rather than the paired positive/negative groups.

\begin{figure}[t]
\centering
\begin{tikzpicture}
\begin{semilogyaxis}[twcaxis, xlabel={average-channel-power calibration error (dB)}, ylabel={relative RMSE},
 xmin=-0.1, xmax=3.1, ymin=0.02, ymax=0.30,
 legend style={at={(0.03,0.95)},anchor=north west}]
\addplot[mark=triangle*,brown,thick,mark size=1.2] coordinates {(0,0.11770)(0.5,0.11925)(1,0.12379)(1.5,0.13135)(2,0.14204)(3,0.17396)};
\addlegendentry{affine NC-OAC}
\addplot[mark=diamond*,black,thick,mark size=1.2] coordinates {(0,0.03114)(0.5,0.03405)(1,0.04138)(1.5,0.05182)(2,0.06452)(3,0.09605)};
\addlegendentry{REED}
\addplot[mark=o,blue,densely dashed,thick,mark size=1.2] coordinates {(0,0.03819)(0.5,0.03961)(1,0.03858)(1.5,0.03977)(2,0.03896)(3,0.04018)};
\addlegendentry{Bernoulli ODC}
\addplot[mark=square*,red,thick,mark size=1.0] coordinates {(0,0.02740)(0.5,0.02674)(1,0.02684)(1.5,0.02752)(2,0.02755)(3,0.02644)};
\addlegendentry{BOC}
\end{semilogyaxis}
\end{tikzpicture}
\caption{Sensitivity to average-channel-power calibration error, $R=2048$, $K=100$, $20$~dB mean receive SNR.}
\label{fig:calib}
\end{figure}

Fig.~\ref{fig:calib} withdraws the exact $\mu_k$ that Figs.~\ref{fig:baseline} and~\ref{fig:snr} granted the energy baselines: each device using them applies $\hat\mu_k$ in place of $\mu_k$, with a log-normal error of standard deviation $\sigma_{\rm dB}$ normalized so that the transmit scaling is unbiased. REED is exact when $\mu_k$ is known exactly, and the figure starts from that point. A calibration error multiplies device $k$'s contribution by $(\mu_k/\hat\mu_k)^2$, a weight that is common to all of its pairs, so pair diversity cannot average it away and it acts as a second error floor alongside the fading-induced floor that REED is designed to suppress. The added relative standard deviation is $\sqrt{e^{s^{2}}-1}\,\lVert x\rVert_2/S$ with $s=(\ln 10/10)\sigma_{\rm dB}$ and $\lVert x\rVert_2=(\sum_k x_k^2)^{1/2}$, which evaluates to $2.7\%$, $5.6\%$ and $9.0\%$ at $1$, $2$ and $3$~dB for the population used here and matches the simulated curve. The corresponding REED error rises from $3.11\%$ to $4.14\%$, $6.45\%$ and $9.60\%$, and affine NC-OAC from $11.8\%$ to $17.4\%$, while BOC stays within $2.6$--$2.8\%$ and Bernoulli ODC within $3.8$--$4.1\%$ across the range, because no channel quantity enters \eqref{eq:identity} or \eqref{eq:qidentity}. The added standard deviation scales as $\lVert x\rVert_2/S=\Theta(K^{-1/2})$, so it matters most for small populations.
\vspace{-5pt}
\section{Conclusions}
\label{sec:concl}
In this paper, we proposed occupancy-domain computation, which changes the observable of over-the-air computation from an amplitude or an energy to a new dimension: a binary busy/idle decision on each resource element, where fading enters only through whether an activation is detected, and no device needs instantaneous or statistical channel knowledge.

The analysis identified what governs accuracy once the observable is binary. Exponential Bernoulli activation follows directly from the occupancy identity and is asymptotically efficient, and we established a conservation law that fixes the cost of operating over an unknown scale. Balanced occupancy computation then improves on it by fixing the placement-draw quota from the data. This removes a source of noise that carries no information about the aggregate, and it places the design outside that law. We established its exact identity, its error law and operating point, its leading-order asymptotic dominance over Bernoulli activation at every load, and a deviation bound valid at finite frame length and finite population.

Against the non-coherent energy baselines, the proposed approach is more accurate while requiring no channel knowledge at all, and it retains that accuracy as their calibration degrades. The two occupancy encoders themselves trade off differently: quota encoding is more accurate and requires less transmit activity, while Bernoulli activation reveals less about an individual value and becomes the simpler choice once the population approaches the frame length. Structured rather than uniform placement, superimposing several coordinates on one frame, the multi-cell case, and formal privacy guarantees for an observable that is anonymous by construction remain open.

\appendices

\section{Proof of Theorem~\ref{thm:bern}}
\label{app:bern}
The variables $B_m$ are i.i.d.\ $\mathrm{Bern}(1-q)$, so the likelihood depends on the frame only through $N_0$. The unconstrained binomial MLE is $\hat q=N_0/M$. Constraining $S\ge0$ is equivalent to $q\le1-\alpha$, which gives the outer clipping in \eqref{eq:bestimator}; the $N_0=0$ truncation supplies the finite upper cap used in the main text. Both clipping events have exponentially small probability at fixed $q\in(0,1)$, so they do not change the stated asymptotic second moment. The delta method with $|dS/dq|=1/(aq)$ gives $\Var(\Shat)=(1-q)/(Ma^{2}q)$, and substituting $a=\lambda/S$ yields \eqref{eq:bclt}. One Bernoulli observation carries Fisher information $(dq/dS)^{2}/[q(1-q)]=a^{2}q/(1-q)$ about $S$, whose frame total inverts to that variance, so the bound is attained.

Setting $\alpha=0$, so that $q=e^{-\lambda}$, gives relative variance $r(\lambda)/M$ with $r(\lambda)=(e^{\lambda}-1)/\lambda^{2}$, which is \eqref{eq:brmse}. Its derivative vanishes if and only if $\lambda e^{\lambda}=2(e^{\lambda}-1)$. Since $\phi(\lambda)=\lambda e^{\lambda}-2e^{\lambda}+2$ has $\phi(0)=0$ and $\phi'(\lambda)=(\lambda-1)e^{\lambda}$, it decreases on $(0,1)$ and increases afterwards, so the positive root is unique, $\lambda^\star=1.59362$, with $\sqrt{r(\lambda^\star)}=1.24263$. At $\lambda=0.8$ and $\lambda=3.2$ we obtain $\sqrt r=1.3838$ and $1.5159$, corresponding to increases of $11.4\%$ and $22.0\%$. \hfill$\blacksquare$

\section{Proof of Theorem~\ref{thm:conserve}}
\label{app:conserve}
(i) With $t=\ln a$ and $s=\ln S$, the per-RE information about $\ln S$ at gain $a$ is $J(aS)$, so $I(s)=\int J(e^{t+s})f(dt)$, which is a convolution. By Tonelli's theorem, $\int_\R I(s)\,ds=\int_\R J(e^{v})\,dv=\int_0^\infty J(u)\,du/u$, which equals $\Gamma(2)\zeta(2)$ by $\int_0^\infty u^{s-1}/(e^{u}-1)\,du=\Gamma(s)\zeta(s)$ at $s=2$. The value is independent of $f$ because $f$ is a probability measure.

(ii) The minimum of $I$ over an interval of length $\ln D$ is at most its average over that interval, which by \eqref{eq:conserve} is at most $\pi^{2}/(6\ln D)$. The Cram\'er--Rao bound over $M$ conditionally independent REs gives the RMSE claim.

(iii) The restriction to an interior aggregate in the statement lets the finite sum over the $Q$ groups be replaced by the two-sided profile $G_\beta(u)=\sum_{q\in\mathbb Z}J(\beta^{q}u)$, the omitted tails being the edge effects the statement excludes. Writing $u=e^{t}$ makes $G_\beta(e^{t})$ periodic with period $\ln\beta$. Unfolding the periodization, its $n$th Fourier coefficient is $(\ln\beta)^{-1}\int_0^\infty v^{1-i\omega_n}/(e^{v}-1)\,dv=(\ln\beta)^{-1}\Gamma(2-i\omega_n)\zeta(2-i\omega_n)$ with $\omega_n=2\pi n/\ln\beta$, so that
\[
G_\beta(u)=\frac{1}{\ln\beta}\sum_{n\in\mathbb Z}\Gamma(2-i\omega_n)\zeta(2-i\omega_n)\,u^{\,i\omega_n}.
\]
Its zero-frequency term is $\pi^{2}/(6\ln\beta)$, and the remaining harmonics are bounded by $\delta_\beta\le\frac{12}{\pi^{2}}\sum_{n\ge1}|\Gamma(2+i\omega_n)\zeta(2+i\omega_n)|$, which gives $\delta_2=6.5\times10^{-5}$ and $\delta_4\simeq2.0\times10^{-2}$. Substituting into the frame information about $\ln S$, which for equal allocation is $\sum_qM_qJ(a_qS)=(M/Q)G_\beta(a_0S)$, and invoking standard asymptotic efficiency of the regular MLE for the independent gain groups, gives \eqref{eq:ladderlaw}. \hfill$\blacksquare$

\section{Proof of Theorem~\ref{thm:qlaw}}
\label{app:qlaw}
Condition on $N$. Extending Lemma~\ref{lem:qidentity} to pairs, $\Prb\{B_m=B_{m'}=0\mid N\}=(1-\alpha)^{2}(1-2/M)^{N}$ for $m\ne m'$, since a burst misses two given REs with probability $1-2/M$. With $\alpha\to0$, $q=(1-1/M)^{N}$ and $q_2=(1-2/M)^{N}$, we have $\E[N_0]=Mq$ and $\Var(N_0)=Mq+M(M-1)q_2-M^{2}q^{2}$. Expanding both, we obtain $q=e^{-\lambda}(1+O(M^{-1}))$ and $q_2-q^{2}=-\lambda e^{-2\lambda}/M+O(M^{-2})$, so $\Var(N_0)=Me^{-\lambda}[1-(1+\lambda)e^{-\lambda}]+O(1)$. The map $N\mapsto q$ has $|dN/d\log q|=1/|\log(1-1/M)|=M+O(1)$, so the delta method gives $\Var(\widehat N\mid N)=\Var(N_0)/q^{2}=M(e^{\lambda}-1-\lambda)+O(1)$. This is valid on the event $N_0>0$, which has probability tending to one: the occupancy indicators are dependent, but uniform placement makes them negatively associated, so $\Prb\{N_0=0\}\le\prod_m\Prb\{B_m=1\}=(1-q)^{M}\to0$ under $Mq\to\infty$. Rounding is independent of the placement, so by the law of total variance $\Var(\widehat N)=M(e^{\lambda}-1-\lambda)+\sum_k\sigma_{r,k}^{2}+O(1)$, where expanding the conditional variance about $\lambda=\E[N]/M$ contributes $O(\Var(N)/M)=O(1)$ under the hypothesis $K/M=O(1)$. Dividing by $\E[N]^{2}=M^{2}\lambda^{2}$ gives \eqref{eq:qlaw}. Repeating the computation with $q=(1-\alpha)e^{-\lambda}$ replaces $e^{\lambda}-1-\lambda$ by $e^{\lambda}/(1-\alpha)-1-\lambda$. \hfill$\blacksquare$

\section{Proof of Theorem~\ref{thm:finite}}
\label{app:finite}
Write $\widehat N=a\Shat$, so that $\widehat N=\log(N_0/M)/\log(1-1/M)$ when $N_0>0$ and $\alpha=0$. Conditioned on $N$, the idle count $N_0$ is a function of the independent placements $P_1,\dots,P_N$, each uniform on $\{1,\dots,M\}$. Changing one placement can free one RE and occupy another, and these act in opposite directions, so the net change in $N_0$ lies in $\{-1,0,1\}$ and the bounded-difference constants are $c_i=1$. Since $\E[N_0\mid N]=Mq$ by Lemma~\ref{lem:qidentity}, McDiarmid's inequality gives $\Prb\{|N_0-Mq|\ge t\mid N\}\le2e^{-2t^{2}/N}$.

On the event $|N_0-Mq|\le t_\delta$ with $t_\delta<Mq$ we have $N_0>0$, and since $N=\log q/\log(1-1/M)$, the difference is $\widehat N-N=\log(1+z)/\log(1-1/M)$ with $z=(N_0-Mq)/(Mq)$. Using $|\log(1+z)|\le|z|/(1-|z|)$ and $1/|\log(1-1/M)|\le M+1$ for $M\ge2$ gives $|\widehat N-N|\le(M+1)\frac{t_\delta/(Mq)}{1-t_\delta/(Mq)}$. Separately, $N-aS=\sum_k(n_k-ax_k)$ is a sum of $K$ independent, zero-mean terms, each taking one of the two values $-\phi_k$ and $1-\phi_k$ and therefore of range one, so Hoeffding's inequality gives $\Prb\{|N-aS|\ge s_\delta\}\le2e^{-2s_\delta^{2}/K}$. Since $\Shat=\widehat N/a$ and $aS=M\lambda$, the triangle inequality gives $|\Shat-S|/S\le(|\widehat N-N|+|N-aS|)/(M\lambda)$, and a union bound at level $\delta/2$ for each event gives \eqref{eq:finitebound}. Substituting $q=e^{-\lambda}(1+O(M^{-1}))$ and $N=M\lambda$ yields the asymptotic form. \hfill$\blacksquare$
\vspace{-5pt}
\section{Proof of Lemma~\ref{lem:miss}}
\label{app:miss}
For Bernoulli activation, a device contributes no effective activation with probability $1-(1-\varepsilon_k)p_k=\varepsilon_k+(1-\varepsilon_k)e^{-ax_k}$, which gives \eqref{eq:bernmiss}. Applying the original logarithmic decoder yields the first expression in \eqref{eq:bernmisslimit}. Expanding $e^{-ax_k}=1-ax_k+O(a^2x_k^2)$ and then the logarithm gives the second expression.

For quota encoding, let $\tilde n_k\mid n_k\sim\mathrm{Bin}(n_k,1-\varepsilon_k)$, with thinning applied to the placement draws. A device transmits one physical burst per distinct selected RE, so draw thinning and burst thinning differ on repeated selections. The expected number of repeated draws is $n_k^{2}/(2M)+O(n_k^{3}/M^{2})$, a fraction $n_k/(2M)$ of the quota, so their relative difference is $O(\max_kn_k/M)$; we assume $\max_kn_k=o(M)$, which holds whenever no single device carries a non-vanishing fraction of the aggregate. The laws of total expectation and total variance give
\[
\begin{aligned}
\E[\tilde n_k]&=(1-\varepsilon_k)ax_k,\\
\Var(\tilde n_k)&=ax_k\varepsilon_k(1-\varepsilon_k)
 +(1-\varepsilon_k)^2\sigma_{r,k}^2.
\end{aligned}
\]
Hence $\E[\tilde N]=aS(1-\varepsilon_x)=M\lambda_e$ and
\[
\Var(\tilde N)=M\lambda(\varepsilon_x-\varepsilon_{2,x})+\Sigma_r.
\]
A uniformly placed quota thinned by independent erasures is again uniformly placed, so Lemma~\ref{lem:qidentity} applies conditional on $\tilde N$. Repeating Appendix~\ref{app:qlaw} at effective load $\lambda_e$ and adding the variance of $\tilde N$ gives \eqref{eq:thin}. \hfill$\blacksquare$

\end{document}